\newtheorem {theorem}{Theorem}
\newtheorem {assumption}{Assumption}
\newtheorem {definition}{Definition}
\newtheorem {proposition}{Proposition}
\newtheorem {remark}{Remark}
\newenvironment {proof}[1][Proof]{\noindent \textbf {#1.} }{\ \rule {0.5em}{0.5em}}
\title{Bounded Foresight Equilibrium in Large Dynamic Economies with Heterogeneous Agents and Aggregate Shocks}
\author{Bilal Islah\footnote{Africa Business School, University Mohammed VI Polytechnic, Morocco. e-mail: \href{mailto:bilal.islah@um6p.ma}{bilal.islah@um6p.ma}} \hspace{5em} Bar Light\footnote{Business School and Institute of Operations Research and Analytics, National University of Singapore, Singapore. e-mail: \href{mailto:barlight@nus.edu.sg}{barlight@nus.edu.sg}}}
\date{February 2025}
\begin{document}
\maketitle

\begin{abstract}

Large dynamic economies with heterogeneous agents and aggregate shocks are central to many important applications, yet their equilibrium analysis remains computationally challenging. This is because the standard solution approach, rational expectations equilibria require agents to predict the evolution of the full cross-sectional distribution of state variables, leading to an extreme curse of dimensionality. 
In this paper, we introduce a novel equilibrium concept, N-Bounded Foresight Equilibrium (N-BFE), and establish its existence under mild conditions. In N-BFE, agents optimize over an infinite horizon but form expectations about key economic variables only for the next 
$N$ periods. Beyond this horizon,  they assume that economic variables remain constant and use a predetermined continuation value. This equilibrium notion reduces computational complexity and draws a direct parallel to lookahead policies in reinforcement learning, where agents make near-term calculations while relying on approximate valuations beyond a computationally feasible horizon. At the same time, it lowers cognitive demands on agents while better aligning with the behavioral literature by incorporating time inconsistency and limited attention, all while preserving desired forward-looking behavior and ensuring that agents still respond to policy changes.  Building on this framework, we conduct a numerical analysis of N-BFE in classical models from the heterogeneous agent macro literature and large dynamic competition models. Importantly, in N-BFE equilibria, forecast errors arise endogenously. We measure the foresight errors for different foresight horizons and show that foresight significantly influences the variation in endogenous equilibrium variables, distinguishing our findings from traditional risk aversion or precautionary savings channels. This variation arises from a feedback mechanism between individual decision-making and equilibrium variables, where increased foresight induces greater non-stationarity in agents’ decisions and, consequently, in economic variables.

\end{abstract}

\doublespacing

\break

\section{Introduction}

In many economic settings, heterogeneous agents or firms make decisions over time, incurring payoffs that depend on both their choices and evolving economic variables such as market prices or total production. These environments are typically modeled as large dynamic economies with idiosyncratic and aggregate states, where individual decisions shape key economic variables, which, in turn, influence the decisions of all participants. This feedback loop creates intricate interdependencies, making the analysis of such systems both conceptually rich and computationally challenging. A key difficulty in these models is that equilibrium prices and other key economic  variables are non-stationary and are evolving endogenously with the economy rather than converging to a fixed steady state, as in stationary models without aggregate states. 

Traditionally, the rational expectations equilibrium has been the standard approach for solving such models. In this approach, agents make decisions based on all available information and correctly forecast the distribution of future economic variables. Rational expectations have been widely used to analyze policy interventions and equilibrium dynamics. However, in large-scale heterogeneous-agent models with aggregate states, that have become popular in recent literature, this approach is generally not computationally feasible even for simple models. This is because rational expectations require agents to forecast future equilibrium prices by predicting the full cross-sectional distribution of state variables, leading to an extreme version of the curse of dimensionality. This results in a dynamic optimization problem for the agents or firms in which the entire cross-sectional distribution becomes a state variable, making the computation of equilibrium highly complex and, in many cases, infeasible. Recently, \cite{Moll2024} provided an excellent analysis of the limitations of rational expectations in such settings, arguing that the assumption that agents forecast equilibrium prices by tracking correctly the full  state distribution imposes computational intractability and unrealistic cognitive demands. These challenges severely limit the applicability of  heterogeneous-agent models in the analysis of large dynamic economies. 

Various methods have been proposed to address the limitations of rational expectations, yet as \cite{Moll2024} emphasizes, developing a viable alternative remains a fundamental challenge. The key question is how to construct tractable models of expectations formation that replace rational expectations while ensuring internal consistency, computational feasibility, partial immunity to Lucas Critique, and alignment with the behavioral literature. 

This paper introduces a novel equilibrium notion for solving large dynamic economies with aggregate shocks that balances tractability and internal model consistency while aligning with key ideas from the computational and behavioral literature. 
In our equilibrium notion, the $N$-Bounded Foresight Equilibrium (N-BFE), agents make optimization decisions over an infinite horizon but form expectations about key economic variables only for the next $N$ periods. Beyond this horizon, they assume these variables remain constant and rely on predetermined values to compute the continuation payoff from that point onward.  N-BFE is characterized by three conditions. First, agents' policy functions must be optimal given their foresight horizon and information they have. Second, the distribution of agents' states must evolve consistently with their decisions and aggregate shocks. Third, the agents' information must be internally consistent, meaning agents' predictions about future economic variables  align with the consistent evolution of the distribution of agents' states. As $N$ increases, N-BFE becomes closer to the rational expectations equilibrium in the sense that agents incorporate more forward-looking information into their decisions. On the other hand, when $N=0$ and there are no aggregate shocks, the 0-BFE resembles a stationary equilibrium notion.

While our approach departs from full rational expectations, N-BFE preserves the key principle that agents are forward-looking and make optimal decisions based on available information. However, instead of requiring agents to forecast the entire equilibrium path of key economic variables, such as market prices or total production, we impose a bounded foresight horizon, limiting the depth of their forward-looking predictions. Thus, beliefs about key economic variables respond to policy changes over a horizon of 
$N$ periods, providing partial immunity to the Lucas Critique by allowing expectations to adjust within the foresight window.  More formally, our foresight framework decomposes the agents' dynamic optimization problem into two stages: a finite-horizon segment, where agents have forecasts on the evolution of key economic variables, and an infinite-horizon stationary continuation, where these variables are assumed to remain constant. By restricting the depth of the foresight tree, this approach significantly reduces computational complexity.  Importantly, unlike traditional approaches that deviate from rational expectations which require the modeler to explicitly specify agents' subjective beliefs, our framework allows agents' forecasting errors to emerge endogenously, depending on their foresight depth and the actual complexity of the model.

A key motivation for our approach is computational considerations. Given the intractability of solving full rational expectations equilibrium in large heterogeneous-agent models, it is natural to impose a bounded depth on forward-looking decision-making, as the primary source of complexity in these models stems from the need to solve for distant future equilibria market prices. In this sense, our bounded foresight framework draws a  direct parallel to reinforcement learning (RL) methods such as the $N$-step lookahead policy \citep{bertsekas2019reinforcement}, particularly in systems that combine online optimization or search with offline function approximation. For example in RL systems such as  AlphaGo \citep{silver2016mastering}, an online agent, such as Monte Carlo Tree Search, performs real-time exploration of future states but is computationally constrained, much like how agents in our model have foresight over only $N$ future periods. Beyond this foresight horizon, both the RL agent and our economic agents rely on an offline valuation. In RL, this could be pre-trained deep neural networks that provide value estimates, while in our framework, it corresponds to simplified predictions, such as assuming constant economic variables or extrapolating from past trends. Our equilibrium notion extends the $N$-step lookahead idea from dynamic optimization to large dynamic economies by formally defining the internal consistency of equilibrium within a bounded depth in these settings.  

A second motivation for our approach is the substantial reduction in agents’ cognitive load while introducing behavioral properties that are arguably more realistic than full rational expectations. The rational expectations framework imposes an unrealistic informational and computational burden on agents, requiring them to form expectations about all future equilibrium prices. This assumption lacks plausibility, as real-world agents lack both the cognitive capacity and the necessary information to perform such high-dimensional forecasting. Agents in our bounded foresight model exhibit two key behavioral patterns that emerge endogenously. 

 First, agents display a form of time inconsistency \citep{laibson1997}, as their decisions evolve dynamically with their rolling $N$-period foresight window. Rather than committing to a single intertemporal policy, agents re-optimize their strategy in each period under the assumption that the economic variables of interest beyond their foresight horizon remain fixed. As the economy progresses, they continuously update their strategy based on their newly projected foresight. The time inconsistency in our model does not stem from present bias but rather from agents' inability to fully anticipate all future variables. As a result, there is little reason to expect their decisions to be time-invariant (see \cite{halevy2015time} for a detailed discussion on different types of time inconsistency).
Second, our framework naturally incorporates limited attention \citep{Gabaix2024}, as agents only forecast economic variables for the next $N$ periods and assume that these variables stabilize beyond that horizon. In this sense, agents do not have ability to monitor key economic variables over an infinite-
horizon. Together, these biases suggest that our approach provides a behaviorally preferred alternative to the rational expectations framework in complicated dynamic settings.

A third motivation for our framework is its natural compatibility with other computational methods to solve the dynamic optimization problem of the agent. For example, the bounded foresight equilibrium we propose can be seamlessly incorporated into RL-based approaches, where agents approximate value functions rather than solving the full dynamic programming problem explicitly. For instance, agents in our model can use function approximation techniques such as neural network-based value function learning or policy gradient methods to optimize their decision rules, making the approach scalable to high-dimensional settings. Moreover, our framework can be integrated with the popular moment-based approximations for solving large dynamic models with heterogeneous agents and aggregate shocks.

Moment-based methods preserve the infinite-horizon tracking of economic variables but simplify computation by summarizing the agents' distribution with a few moments rather than the full distribution. \cite{krusell1998income} focused on the first moment, while subsequent models have extended this approach to additional moments \citep{weintraub2010computational}. While computationally tractable, these approximations rely on the assumption that a small set of moments sufficiently captures the evolution of state variables. If economic dynamics exhibit richer heterogeneity, the approximation may become inaccurate. Additionally, this approach assumes that agents can approximate an infinite horizon of state variable evolution, which remains computationally demanding, and the idea that decision-makers use moments to forecast prices has faced criticism \citep{Moll2024}.  There are many variants that extend and improve the moment-based methods. For example, 
\cite{broer2022possibility} 
proposes a model where households optimally choose their information. \cite{fernandez2023financial} among others provide methodological innovations to improve the computation of equilibrium. Other popular methods to solve large dynamic economies include Taylor approximation techniques and ``MIT shocks", e.g., \cite{ahn2018inequality}, \cite{kaplan2018monetary},   \cite{boppart2018exploiting}, \cite{auclert2021using}, and \cite{bilal2023solving}.  See \cite{Moll2024} for a survey on these methods.

We introduce our equilibrium notion in a general setting where each agent has an individual state (such as the agent's current wealth or the firm's current capacity) and there is an aggregate state that influences all the agents. A prominent example  of such a model is Krusell-Smith type models \citep{krusell1998income} where each agent has an individual state that corresponds to the agent's wealth and the aggregate state influences the productivity of the economy. Another example is Ericson-Pakes type dynamic  models \citep{ericson1995markov} with many firms \citep{weintraub2008markov} where each firm has an individual state that corresponds to the firm's capacity or product's quality and the aggregate state influences the total demand for the firms' products.  
In Section \ref{Section:Model} we describe our model and equilibrium concept formally.  We prove the existence of equilibrium under mild continuity and compactness assumptions. 

Using our foresight framework, we provide a numerical comparative statics analysis that demonstrates an economically significant channel arising from variations in the degree of foresight. We focus on classical models in heterogeneous-agent macroeconomics and dynamic competition, including Krusell-Smith-type economies and capacity competition models. Across a  range of simulations, we find that a substantial variation in endogenous equilibrium variables is driven by foresight. This effect is distinct from mechanisms related to risk aversion or precautionary savings behavior, for example. Instead, it originates from a feedback loop between individual agent decision-making, which forecasts equilibrium variables, and the fact that equilibrium variables simultaneously reflect individual agents' decisions. We illustrate that as agents' capability to anticipate future equilibrium variables increases, they incorporate a higher degree of non-stationarity into their decisions. This, in turn, induces greater non-stationarity in the equilibrium  state variables.

Furthermore, we measure the endogenous degree of forecast errors across different foresight horizons. These errors, defined as the differences between forecasted and realized economic variables, arise endogenously in our bounded foresight model and depend on the complexity of the model under consideration.
Our analysis reveals that forecast errors increase as agents’ foresight horizons expand, a seemingly counterintuitive result. This occurs because greater foresight leads agents to internalize more variability in their decisions, which feeds back into both realized and forecasted equilibrium variables. As agents anticipate more fluctuations, they induce greater variability in the economy itself, making accurate forecasts more challenging. However, as the foresight horizon grows without bound and approaches the rational expectations benchmark, forecast errors necessarily decrease at some horizon.

Before presenting our model formally, we now provide an illustrative example of our foresight framework in the a Krusell-Smith type economy.

\section{Foresight in Krusell-Smith Economy}

In the Krusell-Smith economy, agents face an exogenous binary aggregate shock, $\tilde{z}_t$, which delineates either an expansionary or recessionary aggregate state. Within heterogeneous agent models featuring aggregate shocks, such as the Krusell-Smith model, these exogenous aggregate states are crucial in determining the trajectory of other endogenous economic variables that are established in equilibrium such as the interest rate. This process is depicted in Figure \ref{Fig:Inf} as a binary tree, where endogenous variables in the Krusell-Smith economy, like the wealth distribution, $\tilde{s}_t$, and the interest rate, $r_t$, evolve based on the sequence of the aggregate shock $z_t$.

 We now describe in detail the cases of infinite foresight and one foresight. 

\subsection{Infinite Foresight}

In the case infinite foresight, agents possess the capability to track every possible  pair of $(\tilde{s},r)$ across any future sequence of exogenous aggregate states. We illustrate this in the tree in Figure \ref{Fig:Inf}. At period $k$, the agent has a foresight of the interest rate $r_{k}$ along any string of aggregate states from period $1$ to period $k$, $(z_1,\ldots,z_k)$. 
 A typical equilibrium definition in this setting follows rational expectations and requires that interest rates are fully consistent with the model so  the interest rates are induced by the agents' dynamic savings decisions and the evolution of the wealth distribution in the economy. In this equilibrium notion agents ``correctly" predict expected future economic variables.

\begin{figure}[h]
\centering
\begin{tikzpicture}
     \draw (-5,0) -- (-3,1) ;
     \draw (-5,0) -- (-3,-1) ;
     \draw (-1,2) -- (1,3) ;
     \draw (-1,-2) -- (1,-3) ;
     \draw (-1,2) -- (1,1) ;
     \draw (-1,-2) -- (1,-1) ;
     \node at (-6.5,0) {$(\tilde{s}_{z_1},r_{z_1};z_1)$};
     \node at (-2.5,1.5) {$(\tilde{s}_{z_1z_1},r_{z_1z_1};z_1z_1)$};
     \node at (-2.5,-1.5) {$(\tilde{s}_{z_1z_2},r_{z_1z_2};z_1z_2)$};
     \node at (1.5,3.25) {$(\tilde{s}_{z_1z_1z_1},r_{z_1z_1z_1};z_1z_1z_1)$};
     \node at (1.5,-.75) {$(\tilde{s}_{z_1z_2z_1},r_{z_1z_2z_1};z_1z_2z_1)$};
     \node at (1.5,.75) {$(\tilde{s}_{z_1z_1z_2},r_{z_1z_1z_2};z_1z_1z_2)$};
     \node at (1.5,-3.25) {$(\tilde{s}_{z_1z_2z_2},r_{z_1z_2z_2};z_1z_2z_2)$};
\end{tikzpicture}
\caption{The figure illustrates agent ability in the infinite foresight case to have a forecast on economic variables along any possible path. In the case of binary aggregate states, the tree illustrates all possible future aggregate state histories for two periods starting at state $z_1$.} 
\label{Fig:Inf}
\end{figure}
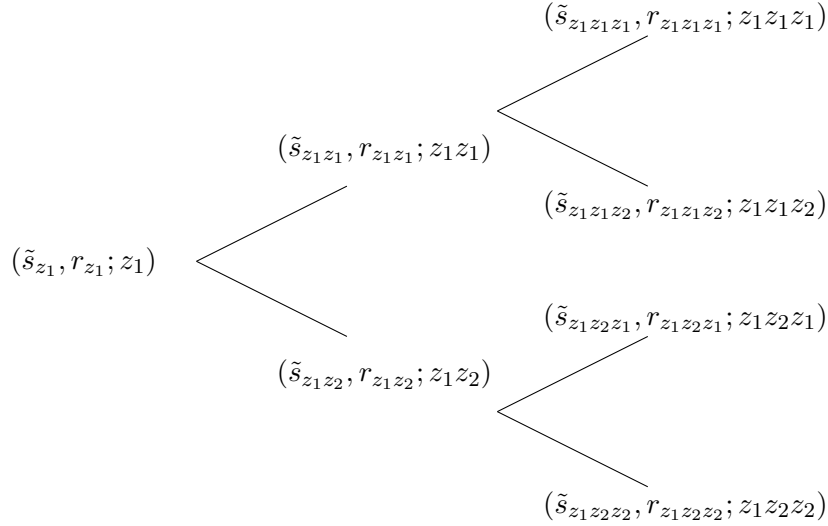

\subsection{One Foresight}

In the 1-bounded foresight equilibrium (1-BFE), agents take into account the possibility of different economic variables in the next period. For example, suppose the economy might be in either a ``good” or ``bad” state in period 1. Agents then forecast two different values for key economic variables (such as the interest rates in Krusell-Smith models) corresponding to these aggregate states. However, beyond period 1, they use predetermined continuation payoffs to evaluate their payoffs from that point onward, assuming the interest rate remains constant. Figure \ref{fig:OneForesight} illustrates agents' foresight: The tree branches at period 1 into two nodes, each representing a different forecast for the interest rate for different aggregate states. However, beyond period 1, agents maintain the same forecast for the interest rate at all future depths of the tree. This captures the idea that agents are forward-looking in the short run while simplifying their long-run outlook by ``freezing” expectations on economic variables after one period.

In our equilibrium notion, economic variables (e.g., interest rates) must be internally consistent with the model. This means that the interest rates anticipated by agents must align with the evolution of the wealth distribution and the agents' savings policy functions. 

Dynamic inconsistency arises because agents' foresight horizon shifts forward as time progresses, leading them to re-optimize their decisions based on newly available information. Once agents move forward one period, their foresight window also shifts, incorporating one additional period into their explicit planning while recalculating the continuation value. This rolling foresight structure creates a discrepancy between the decisions made in the past and the decisions that agents would make when they arrive at that future state, leading to time-inconsistent behavior. As discussed in the introduction, this decision-making in complicated settings aligns with many natural RL methods and the behavioral literature in economics.

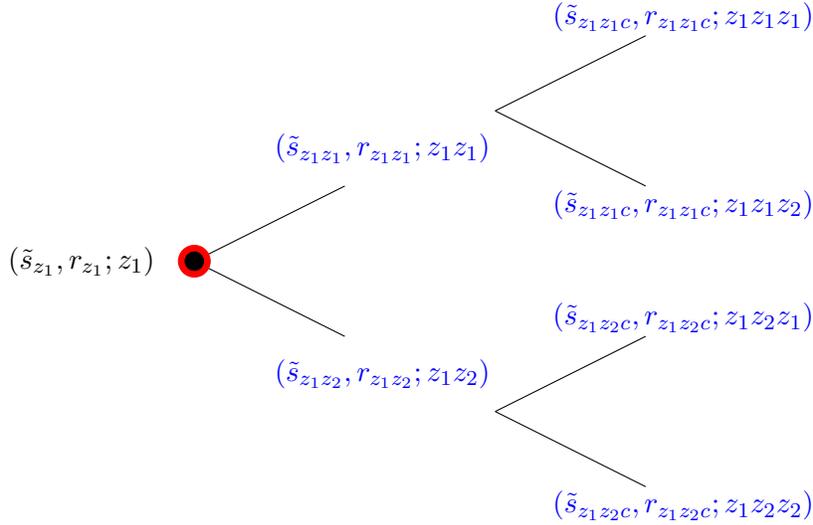
\begin{figure}[h]
\centering
\begin{tikzpicture}
     \draw (-5,0) -- (-3,1) ;
     \draw (-5,0) -- (-3,-1) ;
     \draw (-1,2) -- (1,3) ;
     \draw (-1,-2) -- (1,-3) ;
     \draw (-1,2) -- (1,1) ;
     \draw (-1,-2) -- (1,-1) ;
     \node[circle,draw=red, fill=red, inner sep=0pt,minimum size=12pt] at (-5,0) {};
     \node[circle,draw=black, fill=black, inner sep=0pt,minimum size=7pt] at (-5,0) {};
     \node at (-6.5,0) {$(\tilde{s}_{z_1},r_{z_1};z_1)$};
     \node[text=blue] at (-2.5,1.5) {$(\tilde{s}_{z_1z_1},r_{z_1z_1};z_1z_1)$};
     \node[text=blue] at (-2.5,-1.5) {$(\tilde{s}_{z_1z_2},r_{z_1z_2};z_1z_2)$};
     \node[text=blue] at (1.5,3.25) {$(\tilde{s}_{z_1z_1c},r_{z_1z_1c};z_1z_1z_1)$};
     \node[text=blue] at (1.5,-.75) {$(\tilde{s}_{z_1z_2c},r_{z_1z_2c};z_1z_2z_1)$};
     \node[text=blue] at (1.5,.75) {$(\tilde{s}_{z_1z_1c},r_{z_1z_1c};z_1z_1z_2)$};
     \node[text=blue] at (1.5,-3.25) {$(\tilde{s}_{z_1z_2c},r_{z_1z_2c};z_1z_2z_2)$};
\end{tikzpicture}
\caption{The figure illustrates the one-period foresight structure of agents. Starting from an aggregate state \( z_1 \), the tree branches into two nodes in period 1, each corresponding to a different possible aggregate state.  Agents forecast the interest rate for the next period but assume that from period 2 onward, is follows a constant value. Note that this value does not  necessarily equal the last realized interest rate but can be a function of the history, such as an average of past rates. The notation \( r_{z_1z_1c} \) and \( r_{z_1z_2c} \) indicates that the agent assumes these future interest rates remain fixed for all subsequent periods, conditional on the realized history up to period 1.}
\label{fig:OneForesight}
\end{figure}

\section{Model} \label{Section:Model}

This section presents the large dynamic economic model with aggregate shocks that we study in this paper. Many important models in the economic literature, ranging from the heterogeneous agent macro models to dynamic competition models, fall within our framework. We provide specific examples in Section \ref{Sec:application}, including quality ladder models and Krusell-Smith type models. We now formally present the model.

\textit{Time.} We study a discrete time setting. We index time periods by $t =1 ,2 ,\ldots $.

\textit{Agents.} There is a continuum of ex-ante identical agents of measure $1$. We use $i$ to denote a particular agent.

\textit{States.} There are individual  states for each agent and an aggregate state that is common to all agents. The individual state of agent $i$ at time $t$ is denoted by $x_{i ,t} \in X$ where $X$ is a complete separable metric space. The aggregate state at time $t$ is denoted by $z_{t} \in Z$ where $Z$ is a finite set. We let $\mathcal{P} (X)$ be the set of all probability measures on $X$ and $\mathcal{B}(X)$ be the Borel sigma-algebra on $X$. We denote by
$s_{t} \in \mathcal{P} (X)$ the probability measure that describes the distribution of agents' states at time $t$. We refer to $s_{t}$ as the \emph{population state} of time $t$.

\textit{Actions.} The action taken by agent $i$ at time $t$ is denoted by $a_{i ,t} \in A$ where $A \subseteq \mathbb{R}^{q}$. The set of feasible actions for an agent in state $x$ is given by a set $\Gamma  (x) \subseteq A$ .

\textit{States' dynamics.} The individual state of an agent evolves in a Markov fashion. 
If agent's $i$'s state at time $t -1$ is $x_{i ,t -1}$, the agent takes an action $a_{i ,t -1}$ at time $t -1$, the population state at time $t$ is $s_{t-1}$, the aggregate state at time $t$ is $z_{t}$, and $\zeta _{i ,t}$ is agent $i$'s realized idiosyncratic random shock at time $t$, then agent $i$'s next period's state is given by\footnote{For notational simplicity, we assume that the evolution dynamics of the individual state $x_{t}$ depend on the aggregate state $z_{t}$, rather than $z_{t-1}$, as is the case in our applications. However, the model can be easily extended to incorporate dependence on $z_{t-1}$ as well.
}
\begin{equation*}x_{i ,t} =w (x_{i ,t -1}, z_{t} ,a_{i ,t -1} ,s_{t-1} ,\zeta _{i ,t}).
\end{equation*}
We assume that $\zeta _{i,t} $ are independent and identically distributed random variables across time and agents that take values on a compact separable metric space $E$ and  have a law $q$. The aggregate states evolve in a Markovian fashion according to a finite Markov chain with a transition matrix $P$. For exposition simplicity, we assume in this section that the evolution of the aggregate states is independent from the evolution of the idiosyncratic shocks $\zeta _{i ,t}$.\footnote{Our model can be easily generalized to the case that the idiosyncratic shocks $\zeta$ depends on the realized aggregate state (e.g., in \cite{krusell1998income} the idiosyncratic labor shocks depend on the realized aggregate productivity shock).}
We call $w : X  \times Z \times A \times \mathcal{P}(X) \times E \rightarrow X$ the transition function.

\textit{Payoff.} In a given time period, if the aggregate state is $z$, the state of agent $i$ is $x_{i}$, the population state is $s$, and the action taken
by agent $i$ is $a_{i}$, then the single period payoff for agent $i$ is $\pi (x_{i} ,z ,a_{i} ,s)$. The agents discount their future payoff
by a discount factor $0 <\beta  <1$. Thus, a agent $i$'s infinite horizon payoff is given by: $\sum _{t =1}^{\infty }\beta ^{t-1} \pi  (x_{i ,t} , z_{t},a_{i ,t} ,s_{t})$.

As explained in the introduction, the key distinction between our model and the standard literature on large dynamic games with aggregate shocks lies in the agents' limited foresight and their deviation from the rational expectations framework. We now formally describe the information available for agents in our setting.

\textit{Information structure.} 
 Let $Z^{t}:=\underbrace{Z \times \ldots  \times Z}_{t~ \mathrm{t} \mathrm{i} \mathrm{m} \mathrm{e} \mathrm{s}}$  be the space of all finite aggregate shock histories of length $t$. Every history of aggregate shocks $z^{k} \in Z^{k}$ has an associated population state $s_{k}(z^{k})$ in period $k$. For a period $t \in \mathbb{N}$, an initial aggregate state $z_{t}$, and $N$ degrees of foresight, we define an information structure $\mathcal{I}(t,N,z_{t})$ by a vector of future population states:
\begin{equation*}\mathcal{I}(t,N,z_{t}) =(s_{t}(z_{t}) ,s_{t +1}(z^{t ,t +1}) ,\ldots  ,s_{t +N}(z^{t ,t +N}) ,\ldots )
\end{equation*} 
where $z^{t ,t +k} \in Z^{k +1}$ for $k =1 ,\ldots $ is a history of aggregate shocks of length $k +1$ with an initial state $z_{t}$ and for each history $z^{t ,t +k}$ with $k >N $ such that its projection to the first $N$ periods is exactly $z^{t ,t +N}$ we have $s_{t +k}
(z^{t ,t +k}) =s^{C}(z^{t,t+N})$ for some constant population state that may depend on the history $z^{t,t+N}$ but not on $k$. The vector $\mathcal{I}(t,N,z_{t})$ contains the information that an agent has about future population states. More precisely, it contains all the possible future population states in the next $N$-periods when the current aggregate shock is $z_{t}$ and the population states after period $N$ are assumed to be fixed.  An agent with $N$-degrees of foresight has the information structure $\mathcal{I}(t,N,z_{t})$ in period $t$, and thus, the agent has a prediction on the possible population states for every future history of length $k \leq N$. 
Note that for an information structure $\mathcal{I}(t,N,z_{t})$, the population states for any two histories $(z_{t}, \ldots, z_{t+\tau})$ and $(z_{t}, \ldots, z_{t+\tau'})$ such that $\tau' \geq \tau > N$ and the projection of $(z_{t}, \ldots, z_{t+\tau'})$ onto the first $t + \tau$ periods is $(z_{t}, \ldots, z_{t+\tau})$, are both equal to the constant population state $s^{C}(z_{t}, \ldots, z_{t+\tau})$.

The constant population state $s^{C}(z^{t,t+N})$ can be computed in various ways. For instance, it could represent the average population state along the given history sample path, or it might simply be the population state from the last period of the foresight window (as we assume in our simulations). Generally,  the specific choice of $s^{C}(z^{t,t+N})$ can depend on the concrete application studied and how agents perceive the stability of the population state beyond their foresight window.

 In our applications, the population states determine key economic variables such as the interest rate in the Krusell-Smith economy  or the total production in the dynamic capacity competition models. 
 Hence, the interpretation of $N$-degrees of foresight for an agent is that the agent has a forecast about the possible economic variables for the next $N$ periods and assumes they remain constant thereafter. 
For concreteness, assume that the agent has foresight over an exchange rate. If the agent has $0$-degrees of foresight and there are two possible aggregate states (growth or recession), they assume that the current exchange rate will remain fixed forever when optimizing their decisions. This is despite recognizing that the economy may transition between different states in the next period, they simply do not know how to evaluate its impact on the exchange rate. In contrast, an agent with $1$-degree of foresight anticipates that the exchange rate may take different values in the next period depending on whether the economy experiences growth or recession. 

For ease of notation, we often drop the subscripts $i$ and $t$ and denote a generic transition function by $w (x ,a ,s ,\zeta )$ and a generic payoff function by $\pi  (x ,a ,s)$.

For the rest of the section, we assume the following conditions on the primitives
of the model:

\begin{assumption} \label{Assumption:1}
(i) The payoff function $\pi $ is bounded and (jointly) continuous and the transition function $w$ is continuous.\footnote{
We endow $\mathcal{P} (X)$ with the weak topology. We
say that $s_{n} \in \mathcal{P} (X)$ converges weakly to $s \in \mathcal{P} (X)$ if for all bounded and continuous functions $f :X \rightarrow \mathbb{R}$ we have
\begin{equation*}\underset{n \rightarrow \infty }{\lim }\int _{X}f (x) s_{n} (d x) =\int _{X}f (x) s (d x).
\end{equation*}
}

(ii) The state space $X$ is compact.

(iii) The correspondence $\Gamma  :X \rightarrow 2^{A}$ is compact-valued and continuous.\footnote{\samepage
By continuous we mean both upper hemicontinuous and lower hemicontinuous (see Chapter 17 in \cite{aliprantis2006infinite} for the definitions). }

\end{assumption}

\begin{remark}
For notational simplicity, we assume that agents are ex-ante homogeneous. However, our framework naturally extends to settings with ex-ante heterogeneity, where each agent is assigned a fixed type at the initial period, drawn from some distribution. In this case, both the payoff and transition functions may depend on the agent's type. The formulation and results in this paper remain valid in this more general setting. To keep the exposition concise, we omit these details but refer to \cite{light2022mean} for a discussion of such extensions.
\end{remark}

\subsection{The Agents' Optimization Problem} \label{Sec: agent problem}

In this section we describe the single-agent dynamic optimization problem with $N$ degrees of foresight and starting from some period $t$. 

 Let $X^{k}:=\underbrace{X \times \ldots  \times X}_{k~ \mathrm{t} \mathrm{i} \mathrm{m} \mathrm{e} \mathrm{s}}$. Given an information structure in period $t$ of an agent with $N$ degrees of foresight, a nonrandomized policy $\sigma $ starting from period $t$ is a sequence of (Borel) measurable functions $(\sigma _{t} ,\sigma _{t +1} ,\ldots  ,)$ such that
$\sigma _{k} :Z ^{k -t +1} \times X \rightarrow A$ and $\sigma _{k} (z_{t} ,\ldots z_{k}, ,x_{k}) \in \Gamma  (x_{k})$ for $k =t ,t +1 ,t +2 ,\ldots $. That is, a policy prescribes a feasible action for any possible finite sequence of aggregate states and the current individual state.

 When an agent uses a policy $\sigma $, the agent's information structure is $\mathcal{I}(t,N,z_{t})$, the aggregate state is $z_{t} \in Z$ and the agent's  state is $x_{t} \in X$, then the agent's expected present discounted payoff is
\begin{equation*}V^{\mathcal{I}(t,N,z_{t})}_{\sigma }( x_{t}, z_{t}, s_{t} ) =\mathbb{E}_{\sigma } \left ( \sum _{ t' =t }^{\infty }\beta ^{t'  -t} \pi  (x_{t'} ,z _{t'} ,a _{t'} ,s_{t'}) \right )
\end{equation*}
where $\mathbb{E}_{\sigma }$ denotes the expectation with respect to the probability measure that is generated by the policy $\sigma$.\protect\footnote{ The probability measure is uniquely defined (see \cite{bertsekas1996stochastic} for details).
} 

More generally, we let $V^{\mathcal{I}(t,N,z_{t})} _{\sigma} (x_{\tau} , z_{\tau}, s_{\tau }(z_{t} ,\ldots  ,z_{\tau }) )$ to be the agent's expected discounted payoffs from period $\tau$ when the agent's state is $x$ and the history of aggregate states from period $t$ to $\tau$ is ($z_{t} ,\ldots  ,z_{\tau })$.
Denote
\begin{equation*}
V^{\mathcal{I}(t,N,z_{t})} (  x_{\tau} , z_{\tau}, s_{\tau }(z_{t} ,\ldots  ,z_{\tau })) =\underset{\sigma }{\sup } \ V_{\sigma }^{\mathcal{I}(t,N,z_{t})}(x_{\tau} , z_{\tau}, s_{\tau }(z_{t} ,\ldots  ,z_{\tau })).
\end{equation*}
We call $V$ the \emph{value function }and a policy $\sigma $ attaining it \emph{optimal}.

Using  standard dynamic programming arguments, we can show that the dynamic optimization problem for an agent with $N$-degrees of foresight, $N \in \mathbb{N}_{0}$ can be solved recursively, and a version of the Bellman's principle of optimality holds. The proof is deferred to the Appendix. 

\begin{proposition} \label{Prop:ValueF}
Suppose that Assumption \ref{Assumption:1} holds.   The value function satisfies 
\begin{align*}
& V^{\mathcal{I}(t,N,z_{t})}\left(x_{\tau} , z_{\tau}, s_{\tau}(z_{t} ,\ldots  ,z_{\tau }) \right ) =\max _{a \in \Gamma (x_{\tau})}\pi (x_{\tau } ,z_{\tau} ,a ,s_{\tau }(z_{t} ,\ldots  ,z_{\tau })) \\ + & \beta\int _{E} \sum _{z_{i} \in Z}P(z_{\tau
 },z_{i}) V^{\mathcal{I}(t,N,z_{t})} \left (w(x_{\tau}, z_{i},a ,s_{\tau }(z_{t} ,\ldots  ,z_{\tau }) ,\zeta ) , z_{i} , s_{\tau +1}(z_{t} ,\ldots  ,z_{\tau },z_{i})  \right )q(d
\zeta) \end{align*}
for every string of aggregate states $(z_{t}
,\ldots  ,z_{\tau })$ and state $x_{\tau} \in X$. In addition, the optimal policy correspondence $G^{\mathcal{I}(t,N,z_{t})}\left(x_{\tau} , z_{\tau}, s_{\tau}(z_{t} ,\ldots  ,z_{\tau }) \right )$ consists of the actions that achieve the maximum in the equation above. 
\end{proposition}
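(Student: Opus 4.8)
The plan is to prove Proposition \ref{Prop:ValueF} by establishing that the bounded-foresight optimization problem, although formulated over an infinite horizon, is equivalent to a standard Markov decision process on an enlarged (but finite-dimensional in its relevant bookkeeping) state space, after which the Bellman equation and the characterization of the optimal policy correspondence follow from classical dynamic programming results. Concretely, I would first observe that the information structure $\mathcal{I}(t,N,z_t)$ has the crucial feature that, once $k > N$ aggregate periods have elapsed since $t$, the population state $s_{t+k}(z^{t,t+k})$ equals the constant $s^C(z^{t,t+N})$ depending only on the first $N$ coordinates of the history. Hence the ``effective state'' of the agent's problem is the triple consisting of the individual state $x$, the current aggregate state $z$, and the relevant population state $s_\tau(z_t,\ldots,z_\tau)$ — and the set of population states appearing along any path is finite (at most the number of histories of length $\le N$ plus the finitely many constants $s^C$). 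This reduces the problem to a time-homogeneous MDP on the compact state space $X \times Z \times \mathcal{S}$, where $\mathcal{S}$ is this finite set of population states, with feasibility correspondence inherited from $\Gamma$, bounded continuous per-period reward $\pi$, and transition kernel built from $w$, $q$, and $P$ together with the deterministic rule mapping the current population state to next period's population state dictated by $\mathcal{I}(t,N,z_t)$.

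Second, I would verify that Assumption \ref{Assumption:1} delivers exactly the hypotheses needed for the standard results: $\pi$ is bounded and jointly continuous, $w$ is continuous (so the transition is weakly continuous, using that integration against $q$ of a bounded continuous integrand is continuous), $X$ is compact and $Z$, $\mathcal{S}$ are finite hence the full state space is compact, and $\Gamma$ is compact-valued and continuous. Under these conditions, the Bellman operator $T$ defined by the right-hand side of the claimed equation is a contraction of modulus $\beta$ on the space of bounded functions on $X \times Z \times \mathcal{S}$, its unique fixed point is the value function, and — by the Berge maximum theorem — the fixed point is continuous in $x$ for each $(z,s)$, the $\arg\max$ correspondence is nonempty, compact-valued and upper hemicontinuous, and a measurable selection exists (this is where one invokes, e.g., the measurable maximum theorem). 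I would then translate this fixed point back to the original notation: $V^{\mathcal{I}(t,N,z_t)}(x_\tau, z_\tau, s_\tau(z_t,\ldots,z_\tau))$ is obtained by evaluating the MDP value function at the effective state, and the recursion it satisfies is precisely the displayed equation, because next period's effective population state $s_{\tau+1}(z_t,\ldots,z_\tau,z_i)$ is the one prescribed by $\mathcal{I}(t,N,z_t)$ — whether $\tau + 1 - t \le N$ (a genuine forecast node) or $\tau+1-t > N$ (a frozen node equal to $s^C$). The characterization of the optimal policy correspondence $G^{\mathcal{I}(t,N,z_t)}$ as the maximizers then follows from the verification/unimprovability argument: any policy whose period-$\tau$ action lies in the $\arg\max$ attains $V$, and conversely.

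The step I expect to be the main obstacle — or at least the one requiring the most care — is making the reduction to a time-homogeneous MDP fully rigorous at the level of the probability space and the policy class. The policies $\sigma$ are defined as sequences of measurable maps $\sigma_k : Z^{k-t+1}\times X \to A$, i.e. they may depend on the entire aggregate-state history, not just the current effective state; one must argue (via the standard argument that in an MDP it suffices to restrict to Markov, then to stationary, policies with respect to the effective state) that enlarging or restricting the policy class does not change the supremum, and that the ``principle of optimality'' truncation argument goes through despite the infinite horizon. A secondary subtlety is the well-definedness and measurability of the map $z^{t,t+k} \mapsto s_{t+k}(z^{t,t+k})$ and of $s^C$, which are part of the given information structure and so can be taken as primitives, but one should note explicitly that they take values in a finite set so no measurability issue arises. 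Once these bookkeeping points are settled, the remainder is a direct appeal to the contraction mapping theorem and Berge's/measurable maximum theorem, exactly as in the single-agent stationary dynamic programming literature (e.g., Chapter 9 of Stokey–Lucas or the references already cited in the paper).
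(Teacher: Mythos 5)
Your proposal is correct and reaches the same conclusion by standard dynamic programming, but the decomposition differs from the paper's. The paper splits the problem at period $t+N$: for $\tau$ beyond the foresight window the population state is the constant $s^{C}$, so the continuation problem is an ordinary stationary discounted MDP on $X\times Z$ whose value function is the unique continuous fixed point of the Bellman operator (this is where the contraction/fixed-point machinery is invoked, via the cited continuous dynamic programming results); the equation for $\tau<t+N$ is then obtained by finitely many steps of backward induction, and Berge's maximum theorem gives the characterization and upper hemicontinuity of $G^{\mathcal{I}(t,N,z_t)}$. You instead fold the whole problem into a single time-homogeneous MDP by augmenting the state with the finitely many relevant population states and apply the contraction mapping theorem once. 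Both routes work; the paper's buys a trivially rigorous treatment of the first $N$ periods (a finite recursion needs no fixed-point or policy-class argument), while yours is more uniform but forces you to carry the two bookkeeping burdens you correctly flag: the reduction from history-dependent to Markov policies in the effective state, and the well-definedness of the augmented transition. On the latter, note one imprecision: the ``deterministic rule mapping the current population state to next period's population state'' is not well defined as a function of the population state alone, since two distinct histories could induce the same population state but different successors; the augmented coordinate must be the truncated aggregate history $z^{t,\min(\tau,t+N)}$ rather than $s$ itself. This is cosmetic (your own count of $\mathcal{S}$ is already by histories), but it should be stated that way for the transition kernel to be a genuine function of the enlarged state.
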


\subsection{N-Bounded Foresight Equilibrium}

We now provide a general equilibrium notion that builds on the bounded foresight of agents. This equilibrium notion deviates from both the rational expectations equilibrium and the stationary equilibrium. Importantly, the equilibrium notion we introduce is computationally tractable for relatively small $N$ while for large $N$ it suffers from the curse of dimensionality as the rational expectations equilibrium.

 To simplify notation, we sometimes write $s(z_{t} ,\ldots  ,z_{\tau })$ instead of $s_{\tau}(z_{t} ,\ldots  ,z_{\tau }) $. 
 For any history $z^{t ,t +k} \in Z^{k +1}$ that starts with $z_{t} \in Z$, let $s(z^{t ,t
 +k} ,z_{i} ) ( \cdot )$ be the probability measure on $\mathcal{P}(X)$ that describes the population state given the history of aggregate states $(z^{t ,t +k} ,z_{i})$ and the  agents' policy functions $g^{\mathcal{I}(t,N,z_{t})}$. That is, $s(z^{t ,t
 +k} ,z_{i} )$ is defined recursively by 
 \begin{align} \label{Eq:Dist-dynamics}
 s(z^{t ,t +k} ,z_{i}) (D) &=\int _{X}
 \int _{E} 
 & 1_{D} \left (w(x,z_{i }, g^{\mathcal{I}(t,N,z_{t})}(x ,z_{t+k}, ,s(z^{t ,t +k} ) ), s(z^{t ,t +k} ), \zeta ) \right ) q(d\zeta)s(z^{t ,t +k} )(dx) 
\end{align}
for any $D \in \mathcal{B}(X)$ where $1_{D}$ is the indicator function.
Equation (\ref{Eq:Dist-dynamics}) describes the dynamics of the population states over time given the agents' policy functions. 

We now define the $N$-Bounded Foresight Equilibrium ($N$-BFE). This equilibrium is non-stationary and depends on the initial period $t$, as agents' optimal policies vary with time. In such an equilibrium, agents' policies must be optimal given their information, and population states must be consistent with both the agents' policies and their current information structure. Specifically, the population states and policies  evolve in accordance with the predictions from the current information structures, where foresight is limited to $N$ periods into the future.

\begin{definition}
Fix a period $t$, an initial aggregate shock $z_t$, an initial population state $s(z_t)(\cdot)$, and $N \in \mathbb{N}_0$. An $N$-Bounded Foresight Equilibrium ($N$-BFE) is a collection of an information structure $\mathcal{I}(t,N,z_{t})$, policy functions $g$, and population states such that:
\begin{enumerate}
    \item \textbf{Optimality:}  
    Given the information structure $\mathcal{I}(t,N,z_{t})$, agents’ decisions are optimal. That is, $g^{\mathcal{I}(t,N,z_{t})} \in G^{\mathcal{I}(t,N,z_{t})}$ solves the agents' dynamic optimization problem as described in Section \ref{Sec: agent problem}.
    
    \item \textbf{Population State Consistency:}  
    For any history $z^{t ,t+k}$ with $k \leq N$ that starts with $z_{t}$, the population state $s(z^{t ,t+k}) (\cdot)$ is consistent with the agents’ optimal decisions and evolves according to Equation \eqref{Eq:Dist-dynamics}.
    
    \item \textbf{Information Structure Consistency:}  
    The information structure $\mathcal{I}(t,N,z_{t})$ is consistent with the agents' beliefs about future population states. Specifically, for every history $z^{t ,t+k}$ with $k \leq N$ that starts with $z_{t}$, the population state $s(z^{t ,t+k}) $ predicted by the information structure $\mathcal{I}(t,N,z_{t})$ is equal to the population states described in part 2 of the definition. 
\end{enumerate}
\end{definition}

In the case that $N=0$, if there are no aggregate shocks, and the population state $s$ is invariant,  we retrieve the standard stationary equilibrium, a popular solution concept in a wide range of large dynamic economies and stationary mean field models (e.g., \cite{aiyagari1994uninsured},  \cite{acemoglu2015robust}, and \cite{light2022mean}). On the other hand, if $N=\infty$ so the agents' foresight is not bounded, we retrieve the rational expectations recursive equilibrium (e.g., \cite{krusell1998income}). 

We also note that, in this model, we assume each agent has $N$ periods of foresight. However, the model can readily be extended to cases where agents have different foresight horizons. Specifically, let $N_i$ denote the foresight horizon of agent $i$. In this case, each agent $i$ optimizes their decisions given their own information structure,  the population state needs to be consistent with the policies of all agents, and the equilibrium definition would need to reflect that each agent's foresight is consistent. 

We now establish that an N-BFE exists under Assumption \ref{Assumption:1} and the assumption that there is an optimal policy that is continuous. The proof is deferred to the Appendix. 
\begin{theorem} \label{Theorem existence} Suppose that Assumption \ref{Assumption:1} holds and that there is a continuous selection from the optimal policy correspondence $g^{\mathcal{I}(t,N,z_{t})} \in G^{\mathcal{I}(t,N,z_{t})}$. Then there exists a N-BFE.
\end{theorem}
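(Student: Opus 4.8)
The plan is to set up a fixed-point argument over the finite-dimensional object that genuinely matters for an $N$-BFE: the vector of population states indexed by aggregate-shock histories of length at most $N$. Since $Z$ is finite, there are finitely many such histories, say $z^{t,t+k}$ for $k=0,\ldots,N$; call this index set $H$. A candidate equilibrium is then a point in the product space $\prod_{h \in H} \mathcal{P}(X)$, together with the constant continuation states $s^{C}(z^{t,t+N})$ attached to each terminal history, which are determined by the chosen aggregation rule (e.g. ``last period in the window'') as a continuous function of the window states. So the real unknown lives in a compact convex subset of a finite product of $\mathcal{P}(X)$; compactness of $\mathcal{P}(X)$ in the weak topology follows from compactness of $X$ (Assumption \ref{Assumption:1}(ii)) via Prokhorov's theorem, and the space is convex.

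The map whose fixed point we seek is the composition of two operators. First, given a candidate vector of population states $\mathbf{s} = (s(h))_{h \in H}$, the information structure $\mathcal{I}(t,N,z_t)$ is pinned down (the within-window entries are the $s(h)$, the beyond-window entries are the induced constants), and by Proposition \ref{Prop:ValueF} the agent's problem is a well-posed recursive dynamic program; under the hypothesis that there is a continuous selection $g^{\mathcal{I}(t,N,z_t)} \in G^{\mathcal{I}(t,N,z_t)}$, we obtain optimal policy functions. Second, feeding these policies through Equation \eqref{Eq:Dist-dynamics} recursively from the initial $s(z_t)$ generates a new vector of within-window population states; this defines the operator $\Phi : \prod_{h\in H}\mathcal{P}(X) \to \prod_{h\in H}\mathcal{P}(X)$. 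A fixed point of $\Phi$ is exactly an $N$-BFE: Optimality holds by construction of the policies, Population State Consistency holds because the output states satisfy \eqref{Eq:Dist-dynamics}, and Information Structure Consistency holds because at a fixed point the states used to define $\mathcal{I}$ equal the states generated by the policies.

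To invoke a fixed-point theorem I would use Kakutani–Fan–Glicksberg (allowing for the fact that $G^{\mathcal{I}}$ may be multi-valued, though the continuous-selection hypothesis lets us work with a single-valued branch if preferred, in which case Schauder–Tychonoff suffices). The steps: (i) verify $\prod_{h\in H}\mathcal{P}(X)$ is a nonempty compact convex subset of a locally convex topological vector space — standard given compact $X$; (ii) show the policy-selection step is continuous in $\mathbf{s}$ — this is where I would lean on the maximum theorem applied to the Bellman operator of Proposition \ref{Prop:ValueF}, using joint continuity of $\pi$ and $w$ and continuity of $\Gamma$ from Assumption \ref{Assumption:1}, together with continuity of the map $\mathbf{s}\mapsto s^C$ and weak-continuity of integration against $w(\cdot)$; (iii) show the distributional update \eqref{Eq:Dist-dynamics} is weakly continuous in both the policy and the incoming measure — this follows because $w$ is continuous, $\zeta$ is integrated out against a fixed law $q$, and weak convergence is preserved under pushforward by continuous maps and under integration of bounded continuous test functions; (iv) conclude $\Phi$ is upper hemicontinuous with nonempty compact convex values and apply the fixed-point theorem.

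The main obstacle I anticipate is step (ii): establishing that the optimal policy depends continuously on the candidate population-state vector $\mathbf{s}$. The subtlety is that $\mathbf{s}$ enters the agent's problem not just as a parameter in $\pi$ and $w$ but through the entire information structure, including the continuation value tied to $s^C$, so one must show the value function of the finite-horizon-plus-stationary-tail program varies continuously with $\mathbf{s}$ in the sup-norm, and then that the argmax correspondence is upper hemicontinuous; the continuous-selection hypothesis in the theorem statement is precisely what finesses the possibility that the argmax is a badly-behaved multi-valued object and lets the composed map have the convex-valued structure Kakutani needs. A secondary technical point worth care is that the recursion defining the within-window states in \eqref{Eq:Dist-dynamics} must be carried out history-by-history in order of increasing $k$, so that continuity propagates: $s(z^{t,t+k+1})$ is continuous in $s(z^{t,t+k})$ and in the policy, hence by induction continuous in $\mathbf{s}$.
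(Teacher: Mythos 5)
Your proposal is correct and follows essentially the same route as the paper: a fixed point of the operator mapping a candidate vector of within-window population states to the states induced by the optimal policy under the implied information structure, on the compact convex set $\mathcal{P}(X)^{m}$ with the weak topology, with continuity established via the continuity of $w$ and of the policy selection, and Schauder--Tychonoff applied at the end. The only cosmetic difference is your mention of Kakutani--Fan--Glicksberg as a fallback; the paper works directly with the continuous selection and Schauder--Tychonoff, exactly as you note suffices.
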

From the proof of Proposition \ref{Prop:ValueF}, if $g^{\mathcal{I}(t,N,z_{t})}$ is single-valued, then it is continuous, so existence holds in this case. Standard dynamic programming arguments can be used to establish single-valuedness, particularly when the payoff function satisfies certain concavity conditions (for example, see the analysis of concave and continuous dynamic programming in \cite{light2024principle}). Continuity also typically holds in the case of finite state spaces, which applies to many  applications of interest.

We note that our equilibrium notion can also be extended to accommodate more general settings where additional equilibrium parameters are not determined as a one-to-one function of the population state but instead must satisfy a consistency condition with it.\footnote{For example, in \cite{huggett1993risk}, the equilibrium interest rate is determined by the requirement that aggregate savings equal zero.} Our equilibrium definition can be modified to allow such consistency constraints rather than explicit functional relationships. We omit the details for brevity.

\section{Applications} 
\label{Sec:application}

In this section, we present our applications. To illustrate our results, we focus on two classical applications in economics from different domains. First, we consider a Krusell-Smith type model from the heterogeneous agent macroeconomics literature. Then, we study quality ladder and capacity competition models from the industrial organization literature. In both settings we compute bounded foresight equilibria for different horizons and show how foresight influences key economics variables such as interest rates and total  production.

\subsection{Heterogeneous Agent Models with Aggregate Shocks}

Heterogeneous agent models with aggregate shocks are widely studied in macroeconomics. 
For concreteness, in this section, we focus on the Krusell-Smith type economy \citep{krusell1998income}, which studies  general equilibrium in the presence of production and aggregate shocks. In this model, each agent's consumption and savings decision-making is influenced by both their individual state (e.g., wealth) and the aggregate state of the economy (e.g., productivity or aggregate capital). Market clearing conditions determine the equilibrium prices (e.g., interest and wage rates). In our framework, agents have bounded foresight regarding the evolution of these equilibrium prices as described in Section \ref{Sec: agent problem} and we solve for bounded foresight equilibrium. We note that our bounded foresight setting can be easily adopted to other recent heterogeneous agent macro models (e.g., the various models studied in \cite{acemoglu2015robust} or HANK models \citep{kaplan2018monetary}). We now present the model formally.

\textit{States.} The state of agent $i$ at time $t$ is denoted by $x_{i ,t} =(x_{i ,t ,1} ,x_{i ,t ,2}) \in X_{1} \times X_{2} =X$ where $X_{1} \subseteq \mathbb{R}$, $X_{2} \subseteq \mathbb{R}$ and $x_{i ,t ,1}$ represents agent $i$'s savings and $x_{i,t,2}$ represents agent $i$'s labor productivity at period $t$.

\textit{Actions.} At each time $t$, agent $i$ chooses an action $a_{i ,t} \in \Gamma  (x_{i ,t}) \subseteq \mathbb{R}$ where 
$$\Gamma  (x_{i,t,1} ,x_{i,t,2})=[ -\underline{b} ,R_{t} x_{i,t,1} +w_{t} x_{i,t,2}],$$ 
$R_{t}$ and $w_{t}$ are the rate of return and wage rate, respectively. That is, the agents' savings are restricted to be lower than their cash-on-hand $R_{t} x_{i,t,1} +w_{t} x_{i,t,2}$ and higher than the exogenous  borrowing constraint $\underline{b} \geq 0$.

The wage rate and the interest rate are determined in general equilibrium. There is a representative firm with a production function $z_{t}F(K_{t},L_{t})$  where $L_{t}$ is the labor supply, $K_{t}$ is the capital, and $z_{t}$ is the aggregate shock in period $t$. We assume that $F$ is strictly concave, strictly increasing, homogeneous of degree one, and continuously differentiable. The optimality conditions for the firm's maximization problem yield\footnote{The firm's maximization problem is given by $\max_{K,N} F(K,N) - (R-1+\delta)K - wN$. For more details see, for example, \cite{acemoglu2015robust} and  \cite{Light2017}.} $z_{t}F_{k}(K_{t} ,N_{t}) =R_{t} +\delta  -1$ and $z_{t}F_{N}(K_{t} ,N_{t}) =w_{t}$ where $F_{i}$ denotes the partial derivative of $F$ with respect to $i =K ,N$ and $\delta  >0$ is the depreciation rate. Hence, normalizing labor to $1$ (or considering $K$ as the capital-to-labor ratio)\footnote{Since $F$ is homogeneous of degree one we have $F(K,1)=KF_{K}(K,1) + F_{N}(K,1)$.} $R_{t}=z_{t}f'(K_{t})-\delta +1$ and $w_{t} =z_{t}f (K_{t}) -z_{t}f^{ \prime } (K_{t}) K_{t}$ where $F(K_{t},1)=f(K_{t})$.  

 In equilibrium we have $\int _{X}x_{1} s_{t} (d (x_{1} ,x_{2})) =K_{t}$ where $s_{t}$ is the savings-labor productivities distribution at period $t$. That is, the aggregate supply of savings equals the total capital. 
We denote the aggregate savings by  $H (s_{t}) =\int _{X}x_{1} s_{t} (d (x_{1} ,x_{2}))$.

\textit{States' dynamics.}  In each period agents choose how much to save for future consumption and how much to consume in the current period. The labor productivity function $m$ determines the next period's labor productivity given the current labor productivity level. Formally, if agent $i$'s state at time $t -1$ is $x_{i ,t -1}$, agent $i$ takes an action $a_{i ,t -1}$ at time $t -1$, and $\zeta _{i ,t}$ is agent $i$'s realized idiosyncratic random shock at time $t$, then agent $i$'s state in the next period is given by
\begin{equation*}(x_{i ,t ,1} ,x_{i ,t ,2}) =(a_{i ,t -1} ,m (x_{i ,t -1 ,2} ,\zeta _{i ,t})),
\end{equation*}
where $m :X_{2} \times E \rightarrow X_{2}$ is a continuous function. 

\textit{Payoff.} The agents' utility from consumption is given by a utility function $u$ which is assumed to be strictly concave and strictly increasing. If the agents choose to save $a$ then their consumption in period $t$ is $R_{t} x_{t,1} +w _{t} x_{t,2} -a$. Thus, using the firm's optimization conditions, the payoff function is given by
\begin{equation*}\pi  (x ,z, a ,s) =u\left((zf^{ \prime } (H (s)) - \delta + 1)x_{1} + \left (z f (H (s)) -z f^{ \prime } (H (s)) H (s) \right ) x_{2} -a\right).
\end{equation*}

Assuming compactness of the state space, it is easy to establish that the optimal policy correspondence $G$ is single-valued and that Assumption \ref{Assumption:1} holds. Thus, the existence of an equilibrium follows from Theorem \ref{Theorem existence}. 

In the simulations, an $N$-period bounded foresight agent assumes that the constant population state after $N$-periods equals the last observed population state. That is, for each history $z^{t ,t +k}$ with $k >N -1$ such that its projection to the first $N$ periods is exactly $z^{t ,t +N}$ we have $s_{t +k}(z^{t ,t +k}) =s_{t+N}(z^{t,t+N})$. 

We compute the bounded foresight equilibrium with zero, one and two periods of foresight\footnote{We match the implementation as in https://github.com/econ-ark/HARK/ of the Krusell-Smith model that replicates the \citep{krusell1998income} simulation}. We assume that agents' utility functions are given by $u(c) = \log(c)$.

A key statistic we focus on is the standard deviation of the mean of aggregate capital. In the Krusell-Smith model, there is a direct one-to-one relationship between aggregate capital and the equilibrium interest rate, a key economic variable in the individual's optimization.

We simulate the model along a path of aggregate shocks for 500 periods. In table \ref{tab:KS-capital}, we present the standard deviation of mean capital along a single simulated path of aggregate shocks $(z_1,...,z_{500})$. Extending the foresight horizon from zero to just two periods shows that the variability in mean capital approaches to the levels observed in the moment-based approximation of the Krusell-Smith model \citep{krusell1998income}. 

The plot in figure \ref{fig:KS_sim}  shows a single simulated path of aggregate shocks and demonstrates the variation of mean equilibrium capital to aggregate shocks is greatest in the moment-based approximation of the Krusell-Smith model, and that each degree of additional foresight appears to increase that variation under the foresight framework. In particular, the plot highlights the response to consecutive high productivity and low productivity aggregate states in the model.

\begin{table}[h]
\caption{The table lists the average standard deviation of mean capital for each of the zero, one and two period foresight models in addition to the moment-based approximation of the Krusell-Smith model. That is for a single path of 1000 periods, we compute the standard deviation of mean population capital. We then do this over 1000 paths to compute the exoected variability. Relative to the greater variability in the moment-based Krusell-Smith model, we observe that the standard deviation in mean capital increases with the degree of foresight.} 
\label{tab:KS-capital}
  \begin{center}
    \begin{tabular}{c||c|c|c|c} 
       & \textbf{Zero} & \textbf{One} & \textbf{Two} & \textbf{Full Horizon (Moment-Based)}\\
      \hline
      $\sigma_{\bar{K}}$  & 0.265 & 0.324 & 0.377 & 0.483  \\
    \end{tabular}
  \end{center}
\end{table}

\begin{figure}[h]
\centering
\includegraphics[width=.6\textwidth]{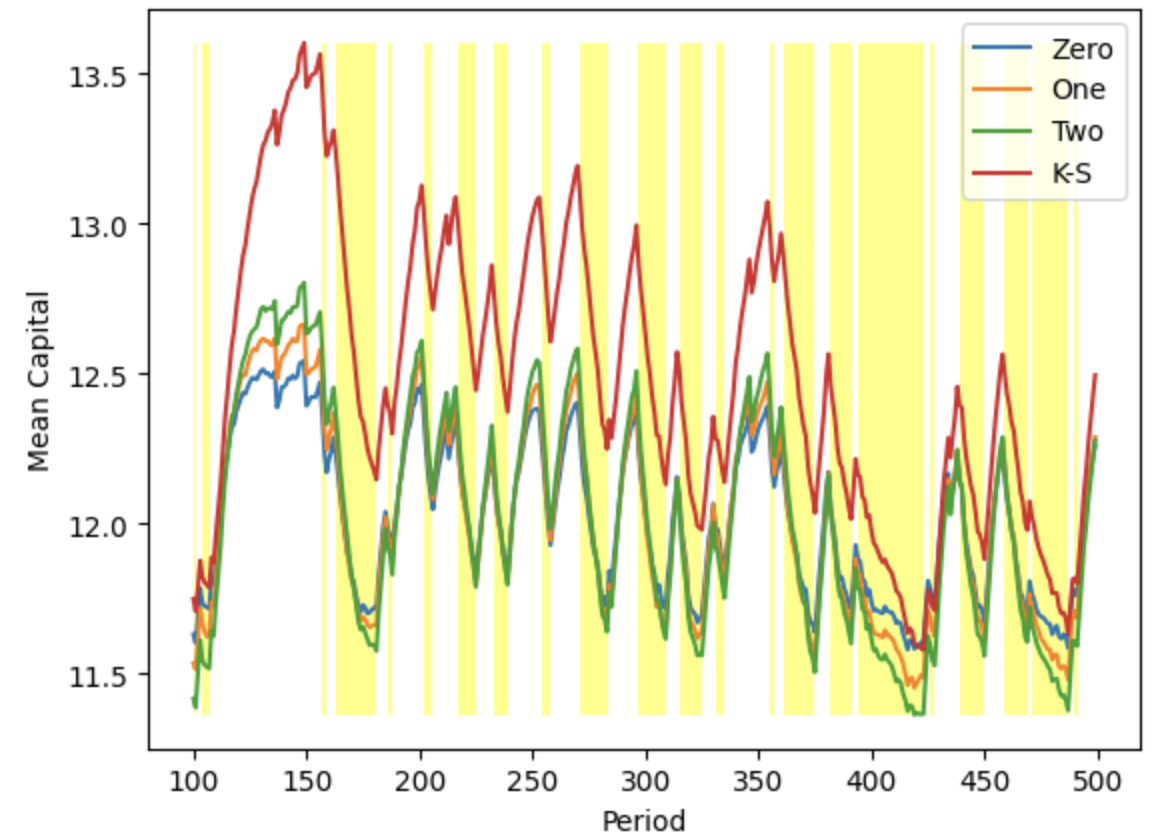}
\caption{This figure plots the mean equilibrium capital for zero, one and two foresight models along with the moment-based approximation of the Krusell-Smith model along a single aggregate history. Each of the models is faced with the same path of aggregate shocks. The yellow part indicates recessionary periods in the Krusell-Smith sense, i.e. the realization of a low aggregate state $z$. We  observe that the moment-based approximation of the  Krusell-Smith model, in red, has the greatest variation in mean capital to recessionary shocks, and that the zero, one and two foresight models have increasing variability. The green path illustrates that the peaks and troughs of the mean capital plot in the two foresight model surpass those of the zero foresight model in blue.}
\label{fig:KS_sim}
\end{figure}

We interpret the increased variation in the following way. When agents possess greater foresight, they adjust their savings decisions by incorporating the anticipated fluctuations in equilibrium capital, and consequently, in the interest rates into their optimization. In contrast, when foresight is limited, agents still optime savings decisions for the long run but effectively assume that the mean capital remains stationary beyond their  forecasting window. As a result, they underestimate the true variability of future prices, leading to a muted equilibrium response. In other words, extending foresight triggers a feedback mechanism whereby more realistic expectations of future fluctuations amplify the variability in equilibrium outcomes.

This result represents a distinct channel from typical sources of variability such as precautionary savings and risk aversion. The mere anticipation of future fluctuations, enabled by an extended foresight window, induces changes in savings decisions that amplify variability in key equilibrium variables.

\subsection{Dynamic Oligopoly Models with Aggregate Shocks}

In this section we study various dynamic models of competition  or dynamic oligopoly models that are widely studied in the economics literature. Although we study models with a large number of firms, we retain the term dynamic oligopoly models to remain consistent with previous literature, where models with many firms are often used to approximate oligopolistic behavior \citep{weintraub2008markov}. We show that these models naturally fit to our framework and present numerical comparative statics results regarding how different foresight horizons influence key market variables. 

We consider dynamic capacity models and dynamic quality ladder models, both of which have received significant attention in the literature. In these models, a firm's state corresponds to a key variable that influences its profits—such as its capacity or the quality of its product. Per-period profits are determined by a static competition game that depends on the heterogeneous firms' state variables and an aggregate shock shared by all firms. Firms take actions to improve their individual states over time. We now describe the models we consider.

\textit{States.} The state
of firm $i$ at time $t$ is denoted by $x_{i ,t} \in X$ where $X = \{0,\ldots,M\}$ is a discrete state space.  

\textit{Actions.} At each time $t$, firm $i$ invests $a_{i ,t} \in A =[0 ,\bar{a}]$ to improve its state. The investment changes the firm's state in a stochastic fashion. 

\textit{States' dynamics.} A firm's state evolves in a Markovian fashion. We adopt dynamics similar to those in \cite{pakes1994computing,adlakha2015equilibria,Ifrach2017}, which have been widely used in such models. 

In each time period, a firm's investment of $a$ is successful with probability $a/(1+a)$, in which case its state (e.g., quality or  level) increases by one level. The firm's state depreciates by one level with probability $\delta \in (0,1)$, independently in each time period. Thus, the firm's state decreases by one with probability $\delta/(1+a)$, increases by one with probability $(1-\delta)a/(1+a)$, and remains at the same level with probability $(1-\delta + \delta a)/(1+a)$.

\textit{Payoff.} The cost of a unit of investment is $d >0$. There is a single-period profit function $u (x,z ,s)$ that is derived from a static game. When a firm invests $a \in A$, then the firm's single-period payoff function is given by $\pi  (x,z ,a ,s) =u (x,z ,s)-da$.  

We now provide two classic examples of static competition models that are commonly studied in the literature and provide an explicit $u (x,z ,s)$. 

The first model is the  competition model based on \cite{besanko2004capacity} and \cite{besanko2010lumpy}. We consider an industry with homogeneous products, where each firm's state variable determines its production. If a firm's state is $x$, its production capacity is given by some increasing function $\bar{q}(x)$. Each period, firms take costly actions to increase their capacity for the following period. Additionally, firms compete in a capacity-constrained quantity-setting game each period. The inverse demand function is $P(Q)$, where $Q$ represents the total industry output. For simplicity, we assume that all firms have zero marginal costs.
Given the total quantity produced by its competitors, $Q_{-i}$, firm $i$ solves the following profit maximization problem: 
$
\underset{0 \leq q_{i} \leq \bar{q}(x_{i})}{\max} \ P(q_{i} + Q_{-i})q_{i}.
$

We could solve for the equilibrium of the capacity-constrained static quantity game played by the firms, and these static equilibrium actions would determine the single-period profits. However, for simplicity, we focus on the limiting regime with a large number of firms, where no individual firm has market power. In this case, firms take $Q$ as fixed and each firm produces at full capacity. The limiting profit function is then given by:
\[
u(x, z, s) =z P \left( \int_X \bar{q}(y) s(dy) \right) \bar{q}(x),
\]
where $z$ is an aggregate shock that reflects the state of the economy and is common to all firms.

Our second example is a classic quality ladder model, where individual states represent the quality of a firm's product (see, e.g., \cite{pakes1994computing} and \cite{ericson1995markov}). Consider a price competition under a logit demand system. There are $N$ consumers in the market. The utility of consumer $j$ from consuming the good produced by firm $i$ at period $t$ is given by
\begin{equation*}u_{i j t} =\theta _{1} \ln  (x_{it} +1) +\theta _{2} \ln  (Y-p_{it}) +v_{ijt},
\end{equation*}
where $\theta _{1}<1 ,\theta _{2} >0$, $p_{it}$ is the price of the good produced by firm $i$, $Y$ is the consumer's income, $x_{it}$ is the quality of the good produced by firm $i$, and $\{v_{i j t}\}_{i ,j ,t}$ are i.i.d Gumbel random variables that represent unobserved characteristics for each consumer-good pair. 

There are $m$ firms in the market and the marginal production cost is constant and the same across firms. There is a unique Nash equilibrium in pure strategies of the pricing
game (see \cite{caplin1991aggregation}). These equilibrium static prices determine the single-period profits. We focus on the limiting profit function that can be obtained from the asymptotic regime in which the number of consumers $N$ and the number of firms $m$ grow to infinity at the same rate. The limiting profit function is given by: 
\begin{equation*}u (x,z ,s) =z \frac{\tilde{c} (x +1)^{\theta _{1}}}{\int _{X}(y +1)^{\theta _{1}} s (d y)}
\end{equation*} (see \cite{light2022mean}). The aggregate state $z$ is common to all firms and represent the state of the economy, and
$\tilde{c}$ is a constant that depends on the limiting equilibrium price, the marginal production cost, the consumer's income, and $\theta _{2}$.

We assume, as in the Krusell-Smith model, that  in the simulations an $N$-period bounded foresight agent assumes that the constant population state after $N$-periods equals the last observed population state. That is, for each history $z^{t ,t +k}$ with $k >N -1$ such that its projection to the first $N$ periods is exactly $z^{t ,t +N}$ we have $s_{t +k}(z^{t ,t +k}) =s_{t+N}(z^{t,t+N})$.

We now provide comparative statics analysis for the capacity competition model. We assume a linear demand structure where $P(Q) = \frac{e}{f} - \frac{Q}{mf}$. In the simulation, we also assume the same set of idiosyncratic and aggregate shocks for each of the various foresight paths in order to directly compare the impact of greater foresight. Finally, we simulate 1000 firms in each period in order to avoid any dynamics influenced by a small set of firms.

\begin{table}[h!]
\caption{Simulation Parameters - Capacity Constraint Model. Parameters are similar to  \cite{Ifrach2017}}
  \begin{center}
    \begin{tabular}{|c||c|}
    \hline
      Parameter & Value \\
      \hline
      $\beta$  & 0.925 \\
      $m$ &  0.125 \\
      $e$ & 75 \\
      $f$ & 10 \\
      $d$ & 4 \\
      $z$ & \{1,2\} \\
      $\delta$ & 0.7 \\
      $a$ & 1 \\
      \hline
    \end{tabular}
  \end{center}
\end{table}

\begin{figure}[h!]
\centering
\includegraphics[width=.6\textwidth]{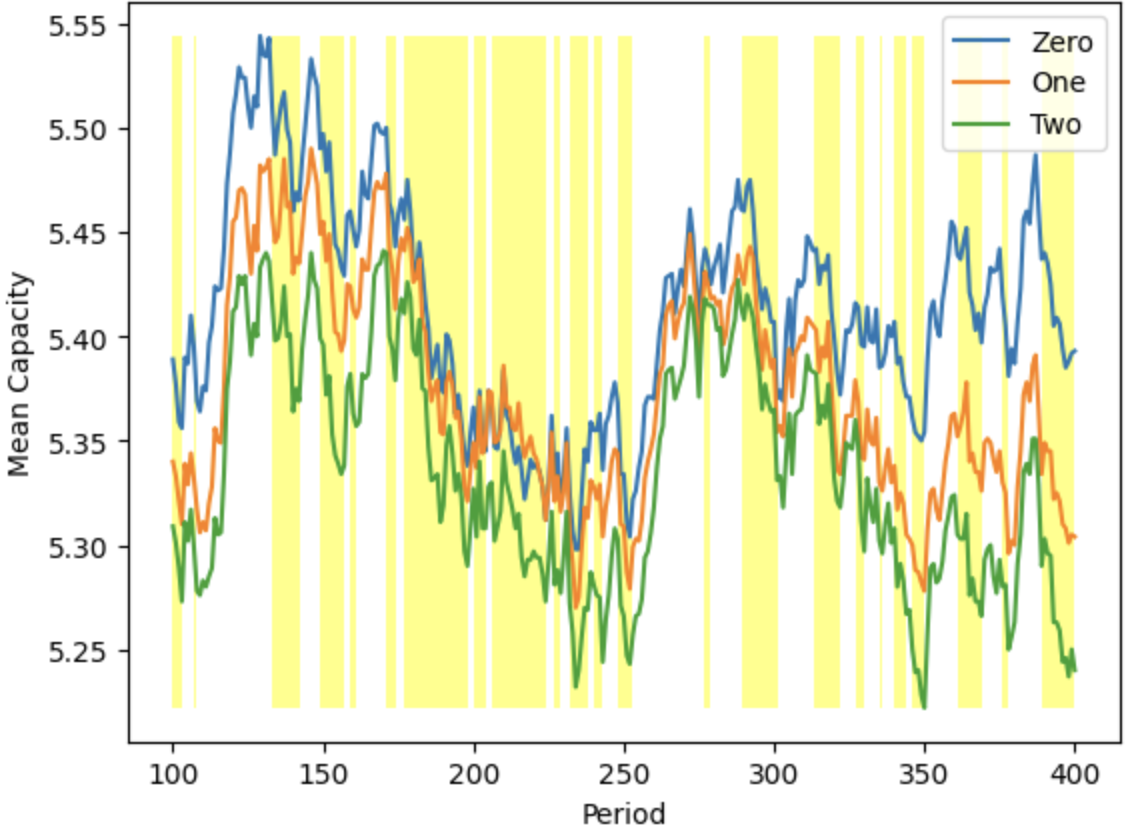}
\caption{We plot the mean aggregate capacity across the zero, one and two period foresight capacity  models. We simulate a single path of aggregate shocks for each of the models. The variability in mean capacity increases with the degree of foresight. In the two foresight case (green line), the variation to negative demand shocks appears greater than the zero foresight model (blue line). The models reach similar peaks in mean capacity during high demand shock states, but negative demand shocks results in greater decreases in mean capacity for two periods of foresight. The yellow part indicates low aggregate demand shock states, $z=1$, as opposed to high aggregate demand shock states, $z=2$. }
\label{fig:cap_sim}
\end{figure}

We compute the bounded foresight equilibrium in the case of zero, one and two periods of foresight. Similarly to the Krusell-Smith simulation, we observe a greater amount of variability in the endogenous economic variables, in this case mean firm capacity (which corresponds to the mean production in the economy under the model parameters we consider). Specifically, we observe that mean capacity is more responsive to high and low aggregate demand shocks with greater foresight. Figure \ref{fig:cap_sim} demonstrates the variability under the simulation of a single path of 400 periods. Table \ref{tab:cap_var} shows that the standard deviation of mean capacity over a simulated path of aggregate shocks is increasing in the degree of foresight. 

While this model differs significantly from the Krusell-Smith economy, the mechanism behind the increased variation in key economic variables is similar. Firms adjust their investment decisions more variably as they extend their foresight horizon, creating a feedback loop where greater awareness of  fluctuations leads to optimal policy adjustments that, in turn, amplify variability in market variables.

\begin{table}[h!]
\caption{The table lists the average standard deviation of mean capacity for each of the zero, one and two period foresight models. That is a for a single path of 1000 periods, we compute the standard deviation of the mean of the population capacity. We then do this over 1000 paths to compute the expected variability. We observe that the standard deviation in mean capacity increases with the degree of foresight.}
\label{tab:cap_var}
  \begin{center}
    \begin{tabular}{c||c|c|c} 
       & \textbf{Zero} & \textbf{One} & \textbf{Two} \\
      \hline
      $\sigma_{\bar{x}}$  & 0.087 & 0.093 & 0.101 \\
    \end{tabular}
  \end{center}
\end{table}

\section{Foresight Error}

In our setting, forecast errors are endogenous and depend on the model studied. While it may seem counterintuitive at first, our results show that forecast errors increase with additional foresight in equilibrium. This aligns with our numerical results discussed in the previous section, where key economic variables exhibit greater variation as foresight increases, leading to larger forecast errors. On the other hand, as the foresight horizon 
$N$ approaches infinity and the bounded foresight equilibrium becomes similar to rational expectations, forecast errors have to vanish.

The following three figures compare the computation of forecast errors. In figure \ref{fig:error}a, the blue tuples of population state and aggregate state string represents the forecast equilibrium variables along the tree for one and two periods ahead from the aggregate state $z_1$. Figure \ref{fig:error}b shows the realization of the aggregate states $z_1z_1$ in the black tuple of population state and aggregate state string and then figure \ref{fig:error}c shows the realization of aggregate state $z_1z_1z_2$, also in a black tuple. The forecast error for each model compares the forecasts from figure \ref{fig:error}a, that is the forecasts of the economic variables for aggregate states $z_1z_1$ and $z_1z_1z_2$ at time $z_1$, with the actual realization of those economic variables along that path. In the case of one period of foresight, the forecasts in aggregate states $z_1z_1$ and $z_1z_1z_2$  at $z_1$ are the same, but their realizations will differ. For two period foresight, both forecasts in those aggregate states will differ and for zero period foresight all forecasts will be the same as the one in $z_1$ and the realizations will differ. We then compute approximately the expected forecast error for each of the Krusell-Smith and Capacity Constraint models, for each of the zero, one and two foresight cases given in Equation \ref{Equation:forecast}.  In this equation, $\bar{s}^f_{z_{t+j}}$ refers to the forecasted mean of the population state $\bar{s}$ for the aggregate history $z_{t+j}$ from the perspective of the initial aggregate state $z_t$, while $\bar{s}^r_{z_{t+j}}$ refers to the realized mean population state for the same aggregate history $z_{t+j}$. That is, the forecast $\bar{s}^f_{z_{t+j}}$ captures the expected population state along a given future path as predicted at $z_t$, whereas $\bar{s}^r_{z_{t+j}}$ represents the actual realized population state along that path.

\begin{equation} \label{Equation:forecast}
\mathbb{E} \left[ \frac{\big|\bar{s}^f_{z_{t+j}}-\bar{s}^r_{z_{t+j}}\big|}{\bar{s}^r_{z_{t+j}}} \Bigg| z_t \right]
\end{equation}

\begin{figure}[H]
\centering
\begin{subfigure}{0.49\textwidth}
\resizebox{\linewidth}{!}{
\begin{tikzpicture}
     \draw (-5,0) -- (-3,1) ;
     \draw (-5,0) -- (-3,-1) ;
     \draw (-1,2) -- (1,3) ;
     \draw (-1,-2) -- (1,-3) ;
     \draw (-1,2) -- (1,1) ;
     \draw (-1,-2) -- (1,-1) ;
     \node[circle,draw=red, fill=red, inner sep=0pt,minimum size=12pt] at (-5,0) {};
     \node[circle,draw=black, fill=black, inner sep=0pt,minimum size=7pt] at (-5,0) {};
     \node at (-6.5,0) {$(\tilde{s}_{z_1};z_1)$};
     \node[text=blue] at (-2.5,1.5) {$(\tilde{s}_{z_1z_1};z_1z_1)$};
     \node[text=blue] at (-2.5,-1.5) {$(\tilde{s}_{z_1z_2};z_1z_2)$};
     \node[text=blue] at (1.5,3.25) {$(\tilde{s}_{z_1z_1c};z_1z_1z_1)$};
     \node[text=blue] at (1.5,-.75) {$(\tilde{s}_{z_1z_2c};z_1z_2z_1)$};
     \node[text=blue] at (1.5,.75) {$(\tilde{s}_{z_1z_1c};z_1z_1z_2)$};
     \node[text=blue] at (1.5,-3.25) {$(\tilde{s}_{z_1z_2c};z_1z_2z_2)$};
\end{tikzpicture}}
\caption{Current forecast}
\end{subfigure}
\begin{subfigure}{0.49\textwidth}
\resizebox{\linewidth}{!}{
\begin{tikzpicture}
     \draw[line width=2.5,red] (-5,0) -- (-3,1) ;
     \draw[line width=.5,black] (-5,0) -- (-3,1) ;
     \draw (-5,0) -- (-3,-1) ;
     \draw (-1,2) -- (1,3) ;
     \draw (-1,-2) -- (1,-3) ;
     \draw (-1,2) -- (1,1) ;
     \draw (-1,-2) -- (1,-1) ;
     \node[circle,draw=red, fill=red, inner sep=0pt,minimum size=12pt] at (-1,2) {};
     \node[circle,draw=black, fill=black, inner sep=0pt,minimum size=7pt] at (-1,2) {};
     \node at (-6.5,0) {$(\tilde{s}_{z_1};z_1)$};
     \node at (-2.5,1.5) {$(\tilde{s}_{z_1z_1};z_1z_1)$};
     \node[text=red] at (-2,-1.5) {$z_1z_2$};
     \node[text=blue] at (1.5,3.25) {$(\tilde{s}_{z_1z_1z_1};z_1z_1z_1)$};
     \node[text=red] at (1.5,-1.25) {$z_1z_2z_1$};
     \node[text=blue] at (1.5,.75) {$(\tilde{s}_{z_1z_1z_2};z_1z_1z_2)$};
     \node[text=red] at (1.5,-3.25) {$z_1z_2z_2$};
\end{tikzpicture}}    
\caption{Moving one period ahead}
\end{subfigure}
\begin{subfigure}{0.5\textwidth}
\resizebox{\linewidth}{!}{
\begin{tikzpicture}
     \draw[line width=2.5,red] (-5,0) -- (-3,1) ;
     \draw[line width=.5,black] (-5,0) -- (-3,1) ;
     \draw[line width=2.5,red] (-1,2) -- (1,1) ;
     \draw[line width=.5,black] (-1,2) -- (1,1) ;
     \draw (-5,0) -- (-3,-1) ;
     \draw (-1,2) -- (1,3) ;
     \draw (-1,-2) -- (1,-3) ;
     \draw (-1,2) -- (1,1) ;
     \draw (-1,-2) -- (1,-1) ;
     \node[circle,draw=red, fill=red, inner sep=0pt,minimum size=12pt] at (1.5,.75) {};
     \node[circle,draw=black, fill=black, inner sep=0pt,minimum size=7pt] at (1.5,.75) {};
     \node at (-6.5,0) {$(\tilde{s}_{z_1};z_1)$};
     \node at (-2.5,1.5) {$(\tilde{s}_{z_1z_1};z_1z_1)$};
     \node[text=red] at (-2,-1.5) {$z_1z_2$};
     \node[text=red] at (1.5,3.25) {$z_1z_1z_1$};
     \node[text=red] at (1.5,-1.25) {$z_1z_2z_1$};
     \node at (1.5,.25) {$(\tilde{s}_{z_1z_1z_2};z_1z_1z_2)$};
     \node[text=red] at (1.5,-3.25) {$z_1z_2z_2$};
\end{tikzpicture}}
\caption{Moving two periods ahead}
\end{subfigure}
\caption{The figure demonstrates the realization of a single simulated path between subfigures (a), (b) and (c) for a model with one-period of foresight. At aggregate state $z_1$ in (a), there is a forecast along each possible path. The forecast error compares the realization of $\tilde{s}$ along the realized simulated path in (b) and (c) in black, in comparison with the forecast made in (a) in blue. In red are the unrealized paths of aggregate states.}
\label{fig:error}
\end{figure}
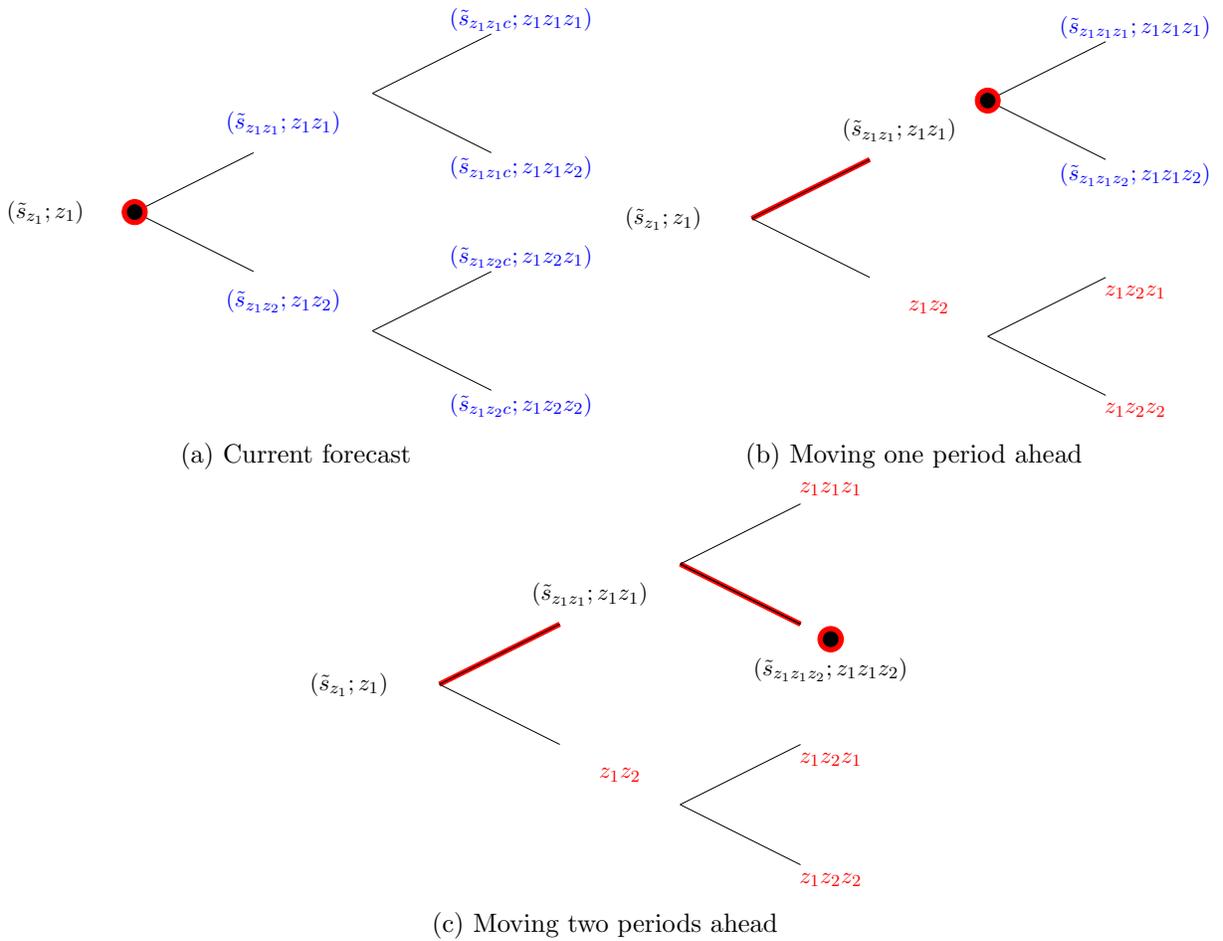

we find that forecast errors one, two, and three periods ahead increase as the foresight horizon expands in both the Krusell-Smith and Capacity Competition models. In other words, adding more foresight increases forecast errors for the same forecast horizon. 

In table \ref{tab:ks_error} and \ref{tab:cap_error}, we find that the forecast errors one, two and three periods ahead of the realized aggregate state increase along the foresight horizon in both the Krusell-Smith and Capacity Competition models. That is, an additional degree of foresight increases the forecast error for the same forecast horizon. As explained above, 
this occurs because greater foresight leads agents to internalize more variability in their decisions, which feeds back into both realized and forecasted equilibrium variables.  Thus, agent awareness of variability results in feedback into an increased realized variability but also in a greater difficulty of accurately forecasting that variability.

\begin{table}[h!]
  \begin{center}
    \begin{tabular}{c||c|c} 
      & \multicolumn{2}{c}{\textbf{Foresight Model}} \\
      Forecast Error & \textbf{Zero} & \textbf{One} \\
      \hline\\ [-2.5ex]
      \text{One Period} & 0.398\% & 1.080\%   \\
      \text{Two Period } & 0.671\% & 1.324\%  \\
      \text{Three Period} & 0.915\% & 1.515\%   \\
    \end{tabular}
  \end{center}
  \caption{We compare the foresight error for each of the zero and one foresight Krusell-Smith models. The degree of foresight shows an increasing amount of foresight error as well as along the horizon of the forecast, one to three periods ahead. We compute the average error along a single path and then take the average over 1000 paths to compute the expected foresight error.}
  \label{tab:ks_error}
\end{table}

\begin{table}[h!]
  \begin{center}
    \begin{tabular}{c||c|c|c} 
      & \multicolumn{3}{c}{\textbf{Foresight Model}} \\
      Forecast Error & \textbf{Zero} & \textbf{One} & \textbf{Two}\\
      \hline\\ [-2.5ex]
      \text{One Period} & 0.236\% & 0.308\% & 0.423\%  \\
      \text{Two Period } & 0.355\% & 0.420\% & 0.509\% \\
      \text{Three Period} & 0.453\% & 0.489\% & 0.650\%  \\
    \end{tabular}
  \end{center}
  \caption{We compare the foresight error for each of the zero, one and two foresight capacity constraint models. The degree of foresight shows an increasing amount of foresight error as well as along the horizon of the forecast, one to three periods ahead. We compute the average error along a single path and then take another average over 1000 paths to compute the expected foresight error.}
  \label{tab:cap_error}
\end{table}

\section{Conclusion}
This paper introduces N-Bounded Foresight Equilibrium (N-BFE) as a novel solution concept for large dynamic economies with heterogeneous agents and aggregate shocks. By relaxing the rational expectations assumption and instead allowing agents to form expectations over a limited foresight horizon, our framework significantly reduces the computational complexity of equilibrium analysis while maintaining forward-looking behavior. This approach aligns with both behavioral and computational perspectives, as it naturally incorporates behavioral features such as time inconsistency and limited attention, along with computational features such as constraining the planning horizon to a finite depth, similar to N-step lookahead methods in reinforcement learning. 

Our numerical analysis demonstrates that foresight is a key driver of equilibrium variation, highlighting its distinct role beyond traditional risk aversion or precautionary savings mechanisms. These numerical results suggest that foresight has important implications for economic variation, but further research is needed to explore its broader effects in different economic settings.

Finally, the flexibility of N-BFE makes it applicable to a wide range of models, including those used to study fiscal and monetary policy, such as HANK models. Future research can leverage this equilibrium framework to analyze policy  and aggregate fluctuations in these environments.

\clearpage
\bibliographystyle{ecta}
\bibliography{foresight}

\break

\appendix

\section{Appendix}

\begin{proof}[Proof of Proposition \ref{Prop:ValueF}]
    First, suppose that $\tau \geq N$. Then $s_{\tau} = s^{C}$ for some constant population state and every population state after the period $\tau$ is also assumed to equal $s^{C}$. Then given Assumption \ref{Assumption:1}, standard dynamic programming arguments (see the continuous dynamic programming analysis in \cite{light2024principle}) show that the value function is the unique and continuous function that satisfies the Bellman equation 
    $$ V^{\mathcal{I}(t,N,z_{t})}\left(x_{\tau}, z_{\tau}, s^{C} \right ) =\max _{a \in \Gamma (x_{\tau})}\pi (x_{\tau } ,z_{\tau} ,a ,s^{C}) + \beta\int _{E} \sum _{z_{i}}P(z_{\tau
 },z_{i}) V^{\mathcal{I}(t,N,z_{t})} \left (w(x_{\tau} ,z_{i} ,a ,s^{C} , \zeta) , z_{i} , s^{C}  \right )q(d
\zeta). $$
Now continuing recursively from period $\tau$ backwards to period $t$ shows that $V^{\mathcal{I}(t,N,z_{t})}$ satisfies the equation in Proposition \ref{Prop:ValueF}. From Assumption 1 and the maximum theorem (see Theorem 17.31 in \cite{aliprantis2006infinite}), $g^{\mathcal{I}(t,N,z_{t})}\left(x_{\tau}, z_{\tau}, s_{\tau}(z_{t} ,\ldots  ,z_{\tau }) \right )$ is upper hemi-continuous so it is a continuous function when it is single-valued. 
\end{proof}

In order to establish the existence of an equilibrium, we will use the following two Propositions.

We say that $k_{n} :X \rightarrow \mathbb{R}$ converges continuously to $k$ if $k_{n} (x_{n}) \rightarrow k (x)$ whenever $x_{n} \rightarrow x$. The following Proposition is a special case of Theorem 3.3 in \cite{serfozo1982convergence}.

\begin{proposition}
\label{Prop1}Assume that $k_{n} :X \rightarrow \mathbb{R}$ is a uniformly bounded sequence of functions. If $k_{n} :X \rightarrow \mathbb{R}$ converges continuously to $k$ and $s_{n}$ converges weakly to $s$ then
\begin{equation*}\underset{n \rightarrow \infty }{\lim }\int _{X}k_{n} (x) s_{n} (d x) =\int _{X}k (x) s (d x).
\end{equation*}
\end{proposition}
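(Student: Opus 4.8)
The plan is to reduce the statement to two ingredients: uniform convergence of $k_n$ to $k$ on the compact space $X$, and the definition of weak convergence applied to the single fixed limit function $k$. Throughout I would use that $X$ is compact (Assumption \ref{Assumption:1}(ii)) and that each $s_n$, being an element of $\mathcal{P}(X)$, is a probability measure; the general statement holds without compactness via Serfozo's theorem, but compactness gives a direct route in our setting.

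First I would record the elementary consequences of continuous convergence. Applying the hypothesis to the constant sequence $x_n \equiv x$ gives pointwise convergence $k_n(x) \to k(x)$, so $|k(x)| \le M$ whenever $M$ uniformly bounds the $k_n$; hence $k$ is bounded. Next, $k$ is continuous: given $x_m \to x$, choose strictly increasing indices $n_m$ with $|k_{n_m}(x_m) - k(x_m)| \le 1/m$ (possible by pointwise convergence), form the full sequence that equals $x_m$ at index $n_m$ and equals $x$ elsewhere (which converges to $x$), and apply continuous convergence to obtain $k_{n_m}(x_m) \to k(x)$; combined with $|k_{n_m}(x_m)-k(x_m)| \to 0$ this yields $k(x_m) \to k(x)$.

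The central step is to upgrade continuous convergence to uniform convergence, where compactness enters. I would argue by contradiction: if $\sup_x |k_n(x)-k(x)|$ does not tend to $0$, there are $\varepsilon>0$, a strictly increasing subsequence $n_j$, and points $x_{n_j}$ with $|k_{n_j}(x_{n_j}) - k(x_{n_j})| \ge \varepsilon$. By compactness pass to a further subsequence with $x_{n_j} \to x^*$. Embedding this subsequence into a full sequence (set to $x^*$ off the subsequence, so still converging to $x^*$) and invoking continuous convergence gives $k_{n_j}(x_{n_j}) \to k(x^*)$, while continuity of $k$ gives $k(x_{n_j}) \to k(x^*)$; subtracting contradicts the bound $\varepsilon$. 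Thus $\|k_n - k\|_\infty := \sup_x |k_n(x)-k(x)| \to 0$.

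Finally I would split the error via the triangle inequality,
\[
\left| \int_X k_n \, ds_n - \int_X k \, ds \right| \le \int_X |k_n - k| \, ds_n + \left| \int_X k \, ds_n - \int_X k \, ds \right|.
\]
The first term is at most $\|k_n - k\|_\infty$ since $s_n$ is a probability measure, hence vanishes by the uniform convergence just established; the second term vanishes by the definition of weak convergence of $s_n$ to $s$ applied to the fixed bounded continuous function $k$. Letting $n \to \infty$ gives the claim. The main obstacle is the middle step, namely extracting genuine uniform convergence from the seemingly pointwise continuous-convergence hypothesis: this is exactly where compactness of $X$ is indispensable, and where the device of embedding a subsequence into a full convergent sequence is needed to legitimately apply the hypothesis along the subsequence.
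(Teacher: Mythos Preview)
Your argument is correct. The reduction to uniform convergence via compactness is sound: continuous convergence on a compact metric space is equivalent to uniform convergence (your subsequence-embedding device handles this cleanly), and once $k$ is shown to be bounded and continuous, the triangle-inequality split finishes the job.

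The paper, however, does not prove this proposition at all; it simply records it as a special case of Theorem 3.3 in \cite{serfozo1982convergence}. So your route is genuinely different: you supply a self-contained, elementary proof that exploits compactness of $X$ (Assumption \ref{Assumption:1}(ii)), whereas the paper defers to Serfozo's general result, which holds on arbitrary spaces without compactness. Your approach buys transparency and avoids an external reference, at the cost of generality you explicitly flag; since the proposition is only ever applied in the proof of Theorem \ref{Theorem existence}, where $X$ is compact, nothing is lost for the paper's purposes.
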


For a proof of the following proposition see Corollary 17.56 in \cite{aliprantis2006infinite}.

\begin{proposition}
(Schauder-Tychonoff) Let $K$ be a nonempty, compact, convex subset of a locally convex Hausdorff space, and let $f :K \rightarrow K$ be a continuous function. Then the set of fixed points of $f$ is compact and nonempty. 
\end{proposition}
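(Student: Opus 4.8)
The plan is to realize an $N$-BFE as a fixed point of a self-map on a finite product of probability-measure spaces and invoke the Schauder--Tychonoff theorem (the final Proposition above). First I would exploit that $Z$ is finite and the foresight horizon is $N$, so an information structure is pinned down by finitely many population states: the free data are the measures $s(z^{t,t+k}) \in \mathcal{P}(X)$ attached to the histories $z^{t,t+k}$ of length $k+1$ with $1 \le k \le N$ that start at $z_t$. The initial measure $s(z_t)$ is held fixed, and the continuation measures $s^{C}(z^{t,t+N})$ are prescribed \emph{continuous} functions of these (for instance $s^{C}=s_{t+N}$ as in the applications, or an average along the path). Collecting the free measures into a vector gives the candidate space $K := \prod \mathcal{P}(X)$, a finite product with one factor $\mathcal{P}(X)$ per such history. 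Since $X$ is compact by Assumption \ref{Assumption:1}(ii), each factor endowed with the weak topology is nonempty, convex, and compact (tightness is automatic on a compact space) and embeds in the locally convex Hausdorff space of finite signed measures under the weak$^*$ topology; a finite product of such sets is again nonempty, convex, and compact by Tychonoff, so $K$ satisfies the hypotheses of Schauder--Tychonoff.

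Next I would define the map $\Phi : K \to K$ whose fixed points are exactly the $N$-BFE. Given a vector in $K$, assemble the full information structure $\mathcal{I}(t,N,z_{t})$ by appending the fixed $s(z_t)$ and the induced continuations $s^{C}$. By Proposition \ref{Prop:ValueF} the agent's optimization is well posed, and by the continuous-selection hypothesis it admits an optimal policy $g^{\mathcal{I}(t,N,z_{t})}$; feeding this policy together with $s(z_t)$ into the recursion \eqref{Eq:Dist-dynamics} generates, history by history, the realized population states for all histories up to length $N+1$. Reading off the realized measures at histories of length $2$ through $N+1$ defines $\Phi$, and since these are genuine probability measures on $X$, the map sends $K$ into $K$. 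A fixed point $s^{*}=\Phi(s^{*})$ then satisfies all three equilibrium conditions at once: Optimality holds because $g$ solves the dynamic program given $\mathcal{I}$; Population State Consistency holds because the measures obey \eqref{Eq:Dist-dynamics} by construction; and Information Structure Consistency is precisely the fixed-point equation, which forces the predicted measures to coincide with the realized ones.

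It remains to prove $\Phi$ continuous, which I would factor through the two maps $\mathcal{I} \mapsto g^{\mathcal{I}(t,N,z_{t})}$ and $(g,s) \mapsto$ pushforward under \eqref{Eq:Dist-dynamics}. For the first, since $\pi$ and $w$ are bounded and jointly continuous and depend on $s$ through the weak topology (Assumption \ref{Assumption:1}(i)), a backward induction on the Bellman recursion of Proposition \ref{Prop:ValueF}---starting from the stationary tail with constant measure $s^{C}$, where $\beta<1$ yields a contraction, and recursing through the finitely many periods back to $t$---shows that the value function depends continuously on the population states constituting $\mathcal{I}$; the maximum theorem then gives an upper hemicontinuous argmax, and in the single-valued case (the setting of the remark after the theorem, covering the applications) the optimal policy is jointly continuous in $(x,z,s)$ and in $\mathcal{I}$. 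For the second map, to verify weak convergence of the pushforward I would test against an arbitrary bounded continuous $f$: setting $k^{\mathcal{I}}(x) = \int_{E} f\bigl(w(x,z_i,g^{\mathcal{I}(t,N,z_{t})}(x,z_{t+k},s),s,\zeta)\bigr)\,q(d\zeta)$, continuity of $f$, $w$, and $g$ gives continuous convergence $k^{\mathcal{I}_n}\to k^{\mathcal{I}}$, so Proposition \ref{Prop1} yields $\int_X k^{\mathcal{I}_n}\,ds_n \to \int_X k^{\mathcal{I}}\,ds$ whenever $s_n\to s$ weakly. Composing these continuous maps across the finitely many histories shows $\Phi$ is continuous, and Schauder--Tychonoff delivers the fixed point.

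The main obstacle I anticipate is the continuity of the policy selection as the information structure varies: the hypothesis supplies a continuous selection for each \emph{fixed} $\mathcal{I}$, whereas $\Phi$ requires the selection to converge continuously as $\mathcal{I}_n \to \mathcal{I}$. This is automatic in the single-valued case, where the maximum theorem gives joint continuity (hence the emphasis in the remark following the theorem), but in the general multivalued case one must either strengthen the hypothesis to a jointly continuous selection or argue that any measurable selection suffices because Proposition \ref{Prop1} only demands continuous convergence along the specific integrands appearing in \eqref{Eq:Dist-dynamics}. Handling this carefully---while tracking that the continuation measures $s^{C}$ themselves vary continuously with $\mathcal{I}$---is where the real work lies.
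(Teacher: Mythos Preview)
You have proved the wrong statement. The proposition in question is the abstract Schauder--Tychonoff fixed-point theorem itself: given any continuous self-map on a nonempty compact convex subset of a locally convex Hausdorff space, the fixed-point set is nonempty and compact. The paper does not prove this at all; it simply cites Corollary~17.56 in \cite{aliprantis2006infinite} and moves on. What you have written is instead a proof sketch of Theorem~\ref{Theorem existence} (existence of an $N$-BFE), which \emph{applies} Schauder--Tychonoff but is a different statement entirely.

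Your sketch of Theorem~\ref{Theorem existence} is, incidentally, essentially the same as the paper's own proof of that theorem: set up the product space $\mathcal{P}(X)^m$, define the one-step pushforward operator $\Phi$, verify compactness and convexity of the domain, establish continuity of $\Phi$ via Proposition~\ref{Prop1} and the continuity of $w$ and $g$, and invoke the fixed-point theorem. You are somewhat more careful than the paper about the issue of whether the continuous selection $g$ varies continuously as the information structure $\mathcal{I}$ varies (the paper silently assumes this), and you correctly flag this as the delicate point. But none of this addresses the actual proposition you were asked to prove, which is a classical result in nonlinear functional analysis whose proof proceeds by entirely different means (e.g., approximation by finite-dimensional simplices and Brouwer, or via the KKM lemma).
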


\begin{proof}
[Proof of Theorem~\ref{Theorem existence}] Let $g  \in G $ be a continuous optimal policy. 

If $N=0$ the result is trivial so assume $N \geq 1$. 
Let $m= 1 + \sum _{j=1}^{N} |Z|^{j}$ where $|Z|$ is the number of states in the set $Z$. Let $z_{t} \in Z$ be the state in period $t$ and consider the operator $\Phi  :\mathcal{P} (X)^{m} \rightarrow \mathcal{P} (X)^{m}$ defined for  $\boldsymbol{s}  \in \mathcal{P} (X)^{m}$ by $\Phi s (z_{t}) = s(z_{t})$ and

 \begin{align} 
\Phi s(z^{t ,t +k} ,z_{i}) (D) &=\int _{X}
 \int _{E} 
 & 1_{D} \left (w(x,z_{i }, g^{\mathcal{I}(t,N,z_{t})}(x ,z_{t+k}, ,s(z^{t ,t +k} ) ), s(z^{t ,t +k} ), \zeta ) \right ) q(d\zeta)s(z^{t ,t +k} )(dx) 
\end{align}
for any $D \in \mathcal{B}(X)$ and any history $z^{t ,t +k}$ with $k =0, \ldots ,N-1$ that starts with $z_{t}$. Here, $z_{i} \in Z$, and $\mathcal{I}(t,N,z_{t}) =(s_{t}(z_{t}) ,s_{t +1}(z^{t ,t +1}) ,\ldots  ,s_{t +N}(z^{t ,t +N}) ,\ldots ) = (\boldsymbol{s},\ldots)$. That is, $\Phi \boldsymbol{s}$ describes the dynamics of the population states during the first $N$ periods, assuming the agents' policy function is based on the population states given by $\boldsymbol{s}$.

If $\boldsymbol{s}$ is a fixed point of $\Phi $ then $\boldsymbol{s}$ is an N-BFE. Since $X$ is compact,  $\mathcal{P} (X)$ is compact (i.e., compact in the weak topology, see \cite{aliprantis2006infinite}). Thus, $\mathcal{P} (X)^{m}$ is compact. 
Clearly $\mathcal{P} (X)^{m}$ is convex and, endowed with the weak topology, is a locally convex Hausdorff space \citep{aliprantis2006infinite}. 

Thus. if $\Phi $ is continuous, we can apply Schauder-Tychonoff's fixed point theorem to prove that $\Phi $ has a fixed point. We now show that $\Phi $ is continuous.

Assume that $\boldsymbol{s}_{n}$ converges weakly to $\boldsymbol{s}$. Let $f :X \rightarrow \mathbb{R}$ be a continuous and bounded function. Establishing weak convergence coordinate-wise is sufficient to establish weak convergence in the product space $\mathcal{P}(X)^{m}$ so we let $z^{t,t+k}$ to be some history starting from $z_{t}$.

To simplify notation we write $g(x,z_{t+k},\boldsymbol{s})$ instead of  $g^{\mathcal{I}(t,N,z_{t})}(x ,z_{t+k}, ,s(z^{t ,t +k} ))$ 
to capture the dependence the policy function in the predicted population states. 

Since $w$ is jointly continuous and $g$ is continuous, we have 
$$f (w (x_{n} , z_{i}, g (x_{n}, z_{t+k} ,\boldsymbol{s}_{n}) ,{s}_{n}(z^{t,t+k}) ,\zeta ) \rightarrow f(w (x ,z_{i}, g (x ,z_{t+k},\boldsymbol{s}) ,s(z^{t,t+k}) ,\zeta )$$ whenever $x_{n} \rightarrow x$. 
Let $$k_{n} (x):=\int _{E}f (w (x_{n} , z_{i}, g (x_{n}, z_{t+k} ,\boldsymbol{s}_{n}) ,{s}_{n}(z^{t,t+k}) ,\zeta ) q(d\zeta)$$ 
and 
$$k (x):=\int _{E} f(w (x ,z_{i}, g (x ,z_{t+k},\boldsymbol{s}) ,{s}(z^{t,t+k}) ,\zeta )q(d\zeta).$$ 
Then $k_{n}$ converges continuously to $k$, i.e., $k_{n} (x_{n}) \rightarrow k (x)$ whenever $x_{n} \rightarrow x$. Since $f$ is bounded, the sequence $k_{n}$ is uniformly bounded. 

Thus, from Proposition \ref{Prop1},  we have 
\begin{align*}\underset{n \rightarrow \infty }{\lim }\int _{X}f (x) \Phi s_{n}(z^{t ,t +k} ,z_{i}) (dx) &  =\underset{n \rightarrow \infty }{\lim }\int _{X}k_{n} (x)s_{n}(z^{t ,t +k})(d x) \\
 &  =\int _{X}k (x) s(z^{t ,t +k})(d x)  \\
 &  =\int _{X}f (x) \Phi  s (z^{t ,t +k} ,z_{i}) (dx).
 \end{align*}
 We conclude that $\Phi  \boldsymbol{s}_{n}$ converges weakly to $\Phi \boldsymbol{s}$. Thus, $\Phi $ is continuous. From the Schauder-Tychonoff's fixed point theorem, $\Phi $ has a fixed point.

\end{proof}

\end{document}